\theoremstyle{plain}
\newtheorem{theorem}{Theorem}[section]
\newtheorem{proposition}[theorem]{Proposition}
\newtheorem{lemma}[theorem]{Lemma}
\newtheorem{corollary}[theorem]{Corollary}
\theoremstyle{definition}
\newtheorem{definition}[theorem]{Definition}
\newtheorem{remark}[theorem]{Remark}
\newtheorem{example}[theorem]{Example}
\theoremstyle{remark}
{%
\end{oldthebibliography}%
}
\newcommand{\eps}{\varepsilon}
\newcommand{\N}{\mathbb{N}}
\newcommand{\R}{\mathbb{R}}
\newcommand{\x}{\mathsf{x}}
\newcommand{\y}{\mathsf{y}}
\newcommand{\cC}{\mathcal{C}}
\newcommand{\cE}{\mathcal{E}}
\newcommand{\cF}{\mathcal{F}}
\DeclareMathOperator{\card}{card}
\newcommand{\qforq}{\quad\mbox{for}\quad}
\newcommand{\mykill}[1]{}
\renewcommand*\env@matrix[1][\arraystretch]{%
  \edef\arraystretch{#1}%
  \hskip -\arraycolsep
  \let\@ifnextchar\new@ifnextchar
  \array{*\c@MaxMatrixCols c}}
\numberwithin{equation}{section}
\begin{document}

\title{\vspace{-1em}
  Limits of Semistatic Trading Strategies
 }
\date{\today}

\author{
  Marcel Nutz%
  \thanks{
  Departments of Statistics and Mathematics, Columbia University, mnutz@columbia.edu. Research supported by an Alfred P.\ Sloan Fellowship and NSF Grants DMS-1812661, DMS-2106056.}
  \and
  Johannes Wiesel%
  \thanks{Department of Statistics, Columbia University, johannes.wiesel@columbia.edu.}
  \and
  Long Zhao%
    \thanks{Department of Statistics, Columbia University, long.zhao@columbia.edu.}
  }
\maketitle \vspace{-1.2em}

\begin{abstract}
We show that pointwise limits of semistatic trading strategies in discrete time are again semistatic strategies. The analysis is carried out in full generality for a two-period model, and under a probabilistic condition for multi-period, multi-stock models. Our result contrasts with a counterexample of Acciaio, Larsson and Schachermayer, and shows that their observation is due to a failure of integrability rather than instability of the semistatic form. Mathematically, our results relate to the decomposability of functions as studied in the context of Schr\"odinger bridges.
\end{abstract}

\vspace{.3em}

{\small
\noindent \emph{Keywords} Semistatic Trading; Borwein--Lewis theorem

\noindent \emph{AMS 2010 Subject Classification}
91G10; %
60H05; %
26B40 %
}
\vspace{.6em}

\section{Introduction}\label{se:intro}

Closedness properties of trading strategies play a pivotal role in mathematical finance. They are at the heart of the separation arguments underlying the Fundamental Theorem of Asset Pricing, the superhedging duality, the existence of optimal portfolios for utility maximization problems, and other key results (see, e.g., \cite{DelbaenSchachermayer.06,FollmerSchied.11} and the references therein). In the most classical setting, strategies refer to dynamic trading in a stock or other liquidly traded securities. Closedness refers to the limit of a sequence of final outcomes from self-financing trading again being an outcome of some admissible trading strategy, or at least being dominated by one.  %

In addition to dynamic trading in a stock $(X_{t})$, a semistatic strategy allows for buy-and-hold trading in options written on~$X$, usually European options maturing at the time horizon~$T$ of the model. The static nature of this position reflects the increased trading cost relative to the stock. By combining options such as calls with different strikes, the trader can approximate the payoff $g(X_{T})$ for an arbitrary function~$g$. A linear pricing rule for all such payoffs is equivalent to fixing the risk-neutral distribution of~$X_{T}$ \cite{BreedenLitzenberger.78}. Taking this distribution as a primitive, a large body of literature starting with~\cite{Hobson.98} developed the theory of model-free or robust finance, where the  distribution of~$X$ is taken to be unknown or partially unknown, respectively, up to a no-arbitrage condition (see~\cite{BeiglbockHenryLaborderePenkner.11,BouchardNutz.11,GalichonHenryLabordereTouzi.11,Hobson.11}, among many others).

In the classical setting where~$X$ is modeled on a given probability space, semistatic trading is equally natural but the mathematical foundations of the theory are not well developed. We might expect this setting to lie between the usual one (with a fixed reference measure but without options) and the model-free one (with options but without reference measure). Maybe surprisingly, this is not the case: it had been observed for some time that the standard arguments for closure and separation do not apply in a straightforward way. The explanation is provided by~\cite{AcciaioLarssonSchachermayer.17} whose main result consists of two counterexamples, one in discrete and one in continuous time, stating that the space of (final outcomes from) semistatic trading strategies is not closed in the sense defined there. This clearly creates an obstacle to developing the theory along the usual lines. In this paper, we focus on the discrete-time setting and provide \emph{positive} closedness results. In particular, we illuminate what goes wrong in the example of~\cite{AcciaioLarssonSchachermayer.17}. Our results open the door to developing other aspects of mathematical finance in the semistatic setting. One such aspect, the existence of an optimal portfolio for the exponential utility maximization problem, is treated in a companion paper~\cite{NutzWieselZhao.22b} which takes our result as its starting point. A related question comes up in~\cite{Guyon.20, Guyon.21}, where trading also involves the volatility index~VIX and it is postulated that a certain optimal log-density has the form of a semistatic portfolio. The arguments of the present work should be useful to deduce this from first principles.

The aforementioned discrete-time example of~\cite{AcciaioLarssonSchachermayer.17} considers a two-period model of dynamic trading in a stock $(X_{t})_{t=0,1,2}$ and static trading in options $g(X_{2})$, where $g$ is integrable under the law $\mu$~of $X_{2}$. The physical measure~$P$ of the model is taken to be a risk-neutral measure, thus excluding issues related to arbitrage. The authors exhibit a sequence of bounded strategies with nonnegative outcomes converging in $L^{p}$ for any finite $p$ and prove that the limit of the outcomes is not the outcome of an admissible strategy (not even dominated by one). The proof is based on a clever contradiction argument which circumvents studying the limiting random variable in detail.

Our first main result below focuses on a two-period model and a sequence of semistatic strategies converging pointwise. In a non-probabilistic setting, it shows that the limiting outcome is again a semistatic strategy. This result implies stability under almost-sure convergence in a probabilistic setting by application to a set of measure one. As there are no restrictions on the probability measure, the result holds regardless of arbitrage considerations or other probabilistic assumptions. Returning to the example of~\cite{AcciaioLarssonSchachermayer.17}, our result indicates that the main failure relates to the integrability of the limiting option position rather than the semistatic functional form: the limit is still a sum of dynamic trading and an option; however, the integrability of the option can fail. (This is by no means a negligible issue---it removes the obvious way of assigning a price to the option.) Of course, one may hope that \emph{specific} limiting portfolios nevertheless enjoy good integrability properties. In the companion paper~\cite{NutzWieselZhao.22b}, we prove this for the optimal portfolio of exponential utility maximization.

The proof of our two-period, single stock result involves analyzing some finer algebraic structures of the set where convergence takes place. While our experience suggests that the result may extend to more general models, the complexity of our analysis grows rather quickly, possibly to the extent of becoming infeasible. In our second main result, we aim to exhibit a reasonably weak financial/probabilistic condition that eliminates some of the subtleties and enables a general closedness result by a more generic analysis. Indeed, our result covers the standard setting with any (finite) number of stocks and periods, and options on all individual stocks. The analysis immediately extends to trading constraints such as no-shorting, and with modest efforts it could be adapted for options at intermediate dates, options only on some of the stocks, or similar scenarios. The proposed financial condition is that the reference measure~$P$ of the model be equivalent to (i.e., have the same nullsets as) a product measure; namely, the product of all the marginals of the individual stocks at the individual dates. Intuitively, this means that none of the (a priori possible) future stock prices becomes completely impossible given an intermediate state of the prices. We remark that this interpretation is similar to the ``conditional full support'' assumption known in the theory of transaction costs; see \cite{GuasoniRasonyiSchachermayer.08} and the literature thereafter.

Remarkably, we have not been able to use the classical arguments of mathematical finance, nor the techniques known from the dual problem of martingale optimal transport~\cite{BeiglbockNutzTouzi.15} in this work, despite the setting lying between those two. Instead, we draw inspiration from the literature on the dual of the Schr\"odinger bridge problem, especially \cite{BorweinLewis.92,RuschendorfThomsen.97}. See also \cite{Leonard.14,Nutz.20} for general introductions. We shall see that in the present context, the analysis is significantly more involved as the increments of the stock create an interaction between the variables. Nevertheless, notions such as the connectedness defined in~\cite{BorweinLewis.92} are crucially helpful. The observation that semistatic trading is related to a martingale version of the Schr\"odinger bridge problem goes back to~\cite{HenryLabordere.19} which discusses a continuous-time problem.

The remainder of this paper is organized as follows. Section~\ref{se:2period} contains a complete analysis of the two-period, single-stock model. Building on its insights, Section~\ref{se:multiperiod} proposes a probabilistic condition enabling the analysis of a multi-period, multi-stock model: we show that the problem of closedness reduces to a problem of linear algebra. The latter consists in establishing that a certain matrix has full rank, which is the content of Section~\ref{se:fullRank}.

\section{Two-Period Model}\label{se:2period}

In this section, we provide a pointwise convergence analysis for a two-period model with a single stock. In our formulation, the variable $x$ represents the stock price at date~1 and $y$ the price at date~2. Equivalently, the stock price process is given by the canonical process $(X,Y)$ on $\R^{2}$. Throughout the section, we fix $E\subset\R^{2}$, interpreted as the set of possible states for~$(x,y)$.

\begin{definition}\label{de:semistatic2period} 
A function $v:E\to\R$ is a semistatic strategy if 
\begin{equation*}
 v(x,y) = h(x) (y-x) + g(y), \qquad (x,y)\in E
\end{equation*}
for some functions $h,g:\R\to\R$. We refer to~$h$ as a stock position and~$g$ as an option position.
\end{definition} 

In general, $h$ and $g$ are not uniquely determined by~$v$, hence one should think of~$v$ as a function $E\to\R$ rather than consisting of the pair~$h,g$. We note that trading between the initial date~0 and date~1 is not represented explicitly. This entails no loss of generality: if the initial state is deterministic, an expression of the form $h_{0}(x-x_{0})+h(x) (y-x) + g(y)$ can always be rewritten as a semistatic strategy in the above sense.

\begin{theorem}\label{th:main1}
The set of semistatic strategies is closed under pointwise convergence.
\end{theorem}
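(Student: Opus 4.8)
My plan is to prove the theorem componentwise over the connected components of the natural incidence graph of $E$, using a single second-difference identity as the engine and then reconstructing the limiting positions.

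\textbf{Reduction to connected $E$.} Form the bipartite graph $G$ on the vertex set $\proj_x E \sqcup \proj_y E$, joining $x$ to $y$ by an edge exactly when $(x,y)\in E$. Two distinct connected components of $G$ share no $x$-value and no $y$-value: if they shared, say, an $x$-value, the two corresponding points of $E$ would be joined through that vertex. Hence, if on each component $C$ we can write the limit as $v(x,y)=h_C(x)(y-x)+g_C(y)$, the functions $h_C,g_C$ have pairwise disjoint domains and glue (extending by $0$ off their domains) to global $h,g:\R\to\R$ representing $v$ on all of $E$. Since $G$ depends only on $E$, and each $v_n$ restricted to $C$ is again semistatic with $v_n\to v$ pointwise on $C$, it suffices to treat a single connected $E$.

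\textbf{The second-difference identity.} For any ``rectangle'' $(x_1,y_1),(x_2,y_1),(x_1,y_2),(x_2,y_2)\in E$ with $y_1\ne y_2$, writing $v_n(x,y)=h_n(x)(y-x)+g_n(y)$ and taking the alternating sum cancels the $g_n$-terms:
\begin{equation*}
 v_n(x_1,y_1)-v_n(x_2,y_1)-v_n(x_1,y_2)+v_n(x_2,y_2)=\bigl(h_n(x_1)-h_n(x_2)\bigr)(y_1-y_2).
\end{equation*}
Dividing by $y_1-y_2$ produces a number that is \emph{independent of the chosen rectangle} and equals $h_n(x_1)-h_n(x_2)$. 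Each corner-value converges, so the left-hand side converges; hence $h_n(x_1)-h_n(x_2)$ converges to a limit $D(x_1,x_2)$ that is again rectangle-independent and additive, $D(x_1,x_3)=D(x_1,x_2)+D(x_2,x_3)$. This is the crucial rigidity: the increments of the limiting stock position are forced and exist as pointwise limits, even though $h_n$ itself may diverge.

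\textbf{Reconstruction.} I would use $D$ to pin down, for each $n$, a \emph{convergent} choice of representatives $(h_n,g_n)$. Fix a root value $x_*$ in the component. Where $E$ is rich enough that $x_*$ lies on a rectangle, running the identity around a cycle expresses $h_n(x_*)$ as a fixed rational combination of corner-values with nonzero denominators $y_i-y_j$; such a combination is continuous in the $v_n$-values, so $h_n(x_*)$ converges, and then $h_n(x)=h_n(x_*)+\bigl(h_n(x)-h_n(x_*)\bigr)\to h(x)$ along the rectangle connections, with $g_n(y)=v_n(x,y)-h_n(x)(y-x)$ converging as well. Where instead $x_*$ sits only on tree-like links (fibers meeting in a single point), the representation of each $v_n$ genuinely has gauge freedom, which I would use to fix $h_n(x_*)=0$ and propagate outward along a spanning tree, again obtaining convergence. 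In both cases $v(x,y)=h(x)(y-x)+g(y)$ holds on the spanning-tree edges by construction and on the remaining edges because the consistency relation forced there is a fixed rational identity among finitely many corner-values, valid for every $v_n$ and hence for the pointwise limit $v$.

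\textbf{Main obstacle.} The real difficulty is the geometric case analysis that glues these pieces: a connected $E$ may consist of a rectangle-rich ``rigid'' core with flexible tree-like whiskers attached, together with diagonal points $(x,x)\in E$ where the increment $y-x$ vanishes, so $h$ drops out entirely and only $g(x)=v(x,x)$ is seen. Because the factor $y-x$ couples the two coordinates, the clean additive decomposition theory of Borwein--Lewis does not apply verbatim; controlling the vanishing-increment and low-overlap configurations, and showing that the values forced on the rigid part are compatible with those freely chosen on the flexible part so that they assemble into a single pair $(h,g)$, is where the bulk of the work lies.
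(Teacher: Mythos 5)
Your architecture rests on the dichotomy ``rectangle-rich rigid core versus tree-like flexible whiskers,'' and that dichotomy is not exhaustive; the missing middle case is exactly where the proposal breaks and where the paper invests its main effort. A connected set can contain cycles of length $k\geq 3$ without containing any rectangle. Concretely, take the six off-diagonal points $(0,10),(1,10),(1,20),(2,20),(2,30),(0,30)$: they form a single hexagonal cycle, each pair of $x$-values shares exactly one $y$-value, so there is no rectangle and your second-difference engine yields no information at all. Yet $\prod_{i=1}^{3}(y_i-x_i)=10\cdot 19\cdot 28=5320\neq 4860=9\cdot 18\cdot 30=\prod_{i=1}^{3}(y_i-x_{i+1})$, i.e.\ this is an \emph{identifying cycle} in the sense of Definition~\ref{de:cycle}, and by Lemma~\ref{le:identifying} the positions $(h_n,g_n)$ along it are \emph{uniquely} determined by $v_n$: there is no gauge freedom whatsoever. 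Your plan would classify this component as flexible, set $h_n(x_*)=0$, propagate along a spanning tree, and assert that the remaining edge holds because ``the consistency relation is a fixed rational identity among corner-values, valid for every $v_n$.'' That assertion fails precisely here: traversing the cycle, the return map on $h(x_1)$ is affine with slope $\prod_i(y_i-x_{i+1})/\prod_i(y_i-x_i)\neq 1$, so the cycle equation pins $h_n(x_1)$ to a specific, generally nonzero value rather than imposing a gauge-free relation among the $v_n$-values; with the root forced to $0$, the closing edge equation is violated for every $n$. This is the regime the paper handles with the identifying condition \eqref{eq:identifying} for cycles of arbitrary length and the determinant computation of Lemma~\ref{le:identifying} (the $2k\times 2k$ cycle system has determinant $\prod_i(y_i-x_i)-\prod_i(y_i-x_{i+1})$), which delivers both uniqueness and convergence of positions; the genuinely flexible case is then treated not by tree propagation but via the Borwein--Lewis factorization $y-x=a(x)b(y)$ of Lemma~\ref{le:BorweinLewis} and Proposition~\ref{pr:noIdentif}, where the gauge orbit is $(h+\alpha/a,\,g-\alpha b)$, not a constant shift of $h$.

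A second, independent gap concerns the diagonal. Your incidence graph treats $(x,x)\in E$ as an ordinary edge, but such an edge forces $g(x)=v(x,x)$ while leaving $h(x)$ entirely unconstrained, so propagation across it is impossible and, worse, it imposes a constraint that an arbitrarily chosen gauge violates: for $E=\{(1,2),(2,2)\}$ with $h_n(1)\equiv 1$, $g_n\equiv 0$, rooting at $h_n(1):=0$ forces $g(2)=v(1,2)=1$ while the diagonal edge demands $g(2)=v(2,2)=0$. You flag this interaction as the ``main obstacle'' but supply no mechanism for it; the paper's organization---the split $E=E_o\cup E_d$, connectivity defined only within $E_o$, and the final assembly over a partition adapted to both---exists precisely to manage it. As written, your argument establishes the theorem only for components that are rectangle-rigid through the relevant points, or honest trees whose gauge is rooted compatibly with any diagonal edges; the general case requires the identifying-cycle/Borwein--Lewis dichotomy that the proposal does not contain.
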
 

While the theorem states that a limit of semistatic strategies~$v_{n}$ is again a semistatic strategy, we emphasize that the corresponding stock and option positions do not converge in general. Clearly the theorem entails its probabilistic counterparts: if convergence holds $P$-a.s.\ under some measure~$P$, we can apply the above with~$E$ being the set of full measure where convergence holds. Even if~$P$ satisfies a no-arbitrage condition, the structure of~$E$ can be quite complicated in this context, hence the importance of leaving~$E$ general in the theorem.

The proof of Theorem~\ref{th:main1} is stated at the end of Section~\ref{se:2periodAnalysis}, after an analysis which also provides detailed insight about what happens with the stock and option positions in the passage to the limit.

\subsection{Analysis of the Two-Period Model}\label{se:2periodAnalysis}

We divide the given set $E\subset\R^{2}$ into its set $E_d := \{(x,y) \in E: x = y\}$ of diagonal points and the complement $E_{o}:=E \setminus E_d$ of off-diagonal points. Consider a semistatic strategy $v(x,y)=h(x)(y-x)+g(y)$ at a point $(x,y) \in E_{o}$, then as $y-x\neq0$, the value $h(x)$ uniquely determines $g(y)$, and vice versa. Similarly, for a sequence $v_{n}(x,y)=h_{n}(x)(y-x)+g_{n}(y)$ where $v_{n}(x,y)$ converges, the value $h_{n}(x)$ converges if and only if $g_{n}(y)$ does. For $(x,y) \in E_{d}$, the situation is quite different: $v(x,y)=h(x)(y-x)+g(y)=g(y)$, so that the option position is determined (or convergent, respectively), whereas $h(x)(y-x)=0$ irrespectively of the stock position~$h(x)$ at~$x$. This does not mean that~$h(x)$ can be ignored---if $(x,y')\in E$ for some $y'\neq y$, the value~$h(x)$ is nevertheless relevant for the strategy. 

We first focus our analysis on~$E_{o}$. The following notion was introduced by~\cite{BorweinLewis.92} in a different context.

\begin{definition}\label{de:connected}
	Two points $(x,y), (x',y') \in E_{o}$ are \emph{connected}, denoted $(x,y) \sim (x',y')$, if there exist $k\in \N_{0}$ and $(x_i,y_{i})_{i=1}^{k}\in E_{o}^{k}$ such that the points
\begin{equation}\label{eq:path}
	(x,y), (x_1, y), (x_1, y_1), (x_2, y_1), \dots, (x_{k},y_{k}), (x', y_{k}), (x',y')
\end{equation}
all belong to~$E_{o}$. In that case, $(x_i,y_{i})_{i=1}^{k}$ is called a \emph{path} (in~$E_{o}$) from $(x,y)$ to $(x',y')$.
A set $C\subset E_{o}$ is connected (in~$E_{o}$) if any two points in~$C$ are connected. 
\end{definition}

For the list~\eqref{eq:path}, the crucial property is that only one coordinate is changed in each step. In our notation, the first coordinate changes first, but because a point can be repeated in the list, this entails no loss of generality.
We observe that $\sim$ is an equivalence relation on $E_{o}$. The corresponding equivalence classes $\cC=\{ C_{\gamma}: \gamma \in \Gamma \}$ are called the connected components of~$E_{o}$.

We say that \emph{uniqueness of portfolio positions} holds at $(x,y)\in E$ if for any semistatic strategy~$v: E\to\R$, the stock and option positions $h(x)$ and $g(y)$ are uniquely determined at~$(x,y)$. In that situation,  \emph{convergence of portfolio positions} holds at $(x,y)\in E$ if for any semistatic strategies~$v_{n}$ converging pointwise, the positions $h_{n}(x)$ and $g_{n}(y)$ are also convergent.

\begin{lemma}\label{le:conn}
	Let $C \subseteq E_o$ be connected. If uniqueness (convergence) of portfolio positions holds at some point of~$C$, then uniqueness (convergence) of portfolio positions holds at all points of~$C$.
\end{lemma}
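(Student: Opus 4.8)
The plan is to reduce the statement to a single-step propagation along a connecting path and then iterate. Since $C$ is connected, any point $(x',y')\in C$ is joined to a point $(x,y)\in C$ at which uniqueness (convergence) holds by a path as in~\eqref{eq:path}; all the listed points lie in $E_o$, and any two consecutive ones agree in exactly one coordinate (if they agree in both they coincide, and that step is vacuous). Hence it suffices to show that uniqueness (convergence) propagates across a single such edge, i.e.\ between two points of $E_o$ sharing one coordinate.

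For the single-edge step in the convergence case, consider first two points $(x,y),(x',y)\in E_o$ sharing the second coordinate, with convergence holding at $(x,y)$. Then $g_n(y)$ converges, and since $v_n(x',y)$ converges while $x'\neq y$, solving $v_n(x',y)=h_n(x')(y-x')+g_n(y)$ for $h_n(x')=(v_n(x',y)-g_n(y))/(y-x')$ shows that $h_n(x')$ converges; as the option value at the shared node $y$ already converges, convergence holds at $(x',y)$. Symmetrically, for two points $(x,y),(x,y')\in E_o$ sharing the first coordinate, convergence of $h_n(x)$ together with convergence of $v_n(x,y')$ gives $g_n(y')=v_n(x,y')-h_n(x)(y'-x)$ convergent. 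Thus in either case convergence transfers to the adjacent point, and iterating along the path proves the convergence claim.

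For the uniqueness case the argument is identical after replacing limits by differences: given two decompositions of the same $v$, their difference $(\delta h,\delta g)$ satisfies $\delta h(a)(b-a)+\delta g(b)=0$ at every $(a,b)\in E_o$, and uniqueness at a point means $\delta h$ (equivalently $\delta g$) vanishes there. The same two cases, using $b-a\neq 0$ on $E_o$ either to divide or to force a factor to vanish, show that $\delta h$ and $\delta g$ vanish at the adjacent node, and iteration along the path completes the proof.

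The argument is essentially mechanical; the only points requiring attention are that each edge of the path changes exactly one coordinate, so that at every step the shared node carries the already-established information while the off-diagonal relation $b-a\neq 0$ lets us recover the position at the new node, and that the roles of $h$ and $g$ alternate along the path but are handled uniformly by the two symmetric cases above. I do not expect a serious obstacle here; the real content will lie in later results identifying when uniqueness or convergence holds at some seed point of a component and in treating the diagonal set $E_d$.
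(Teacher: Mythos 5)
Your proposal is correct and takes essentially the same route as the paper: the paper's proof also propagates uniqueness/convergence step by step along the path in~\eqref{eq:path}, alternating between option and stock positions and using that $y-x\neq 0$ at off-diagonal points to recover one position from the other. Your only cosmetic difference is phrasing uniqueness via the difference $(\delta h,\delta g)$ of two decompositions, which is equivalent to the paper's ``$h(x)$ uniquely determines $g(y)$, and vice versa'' observation.
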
	

\begin{proof}
  Let uniqueness of portfolio positions hold at~$(x,y)$ and let $(x,y)\sim (x',y')$. Consider a path as in~\eqref{eq:path}, then as discussed at the beginning of Section~\ref{se:2periodAnalysis}, uniqueness of the option position at~$y$ implies uniqueness of the stock position at~$x_{1}$ which in turn implies uniqueness of the option position at~$y_{1}$, and so on, leading to uniqueness at~$(x,y)$. Similarly for the convergence.
\end{proof} 

\begin{definition}\label{de:cycle}
  A path $(x_i,y_{i})_{i=1}^{k}$ from $(x,y)\in E_{o}$ to itself %
  is called a \emph{cycle}. The cycle is \emph{identifying} if 
	\begin{equation} \label{eq:identifying}
		\prod_{i=1}^{k}(y_i - x_i) - \prod_{i=1}^{k} (y_i - x_{i+1}) \neq 0,
	\end{equation}
	where we use the cyclical convention $x_{k+1} := x_1$.
\end{definition}  

The terminology is explained by the subsequent lemma: such a cycle uniquely ``identifies'' the portfolio positions along its points. We note that $k\geq2$ must hold for any identifying cycle. If $((x_1,y_{1}),(x_{2},y_{2}))$ is a cycle, then it is identifying if and only if $x_1 \neq x_2$ and $y_1 \neq y_2$, because~\eqref{eq:identifying} reduces to $(x_1 - x_2)(y_1 - y_2) \neq 0$. In particular, the cycle can be envisioned as a nondegenerate rectangle $\{x_1, x_2\} \times \{y_1, y_2\} \subset E_{o}$. This simple characterization does not extend to larger cycles.

\begin{lemma}\label{le:identifying}
	Let $(x_i,y_{i})_{i=1}^{k}$ be an identifying cycle. Then uniqueness and convergence of portfolio positions hold at $(x_1,y_{1})$.
\end{lemma}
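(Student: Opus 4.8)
The plan is to eliminate the option positions along the cycle and reduce everything to a single scalar equation for the stock position $h(x_1)$, whose solvability is governed precisely by the identifying quantity~\eqref{eq:identifying}. First I would record the points that the cycle visits. Tracing the one-coordinate-at-a-time convention of~\eqref{eq:path} for a path from $(x_1,y_1)$ to itself, the cycle runs through
\[
  (x_1,y_1),\,(x_2,y_1),\,(x_2,y_2),\,(x_3,y_2),\,\dots,\,(x_k,y_k),\,(x_1,y_k),\,(x_1,y_1),
\]
with the cyclical convention $x_{k+1}=x_1$; consecutive entries differ in a single coordinate, and for $k=2$ this is the rectangle $\{x_1,x_2\}\times\{y_1,y_2\}$ discussed above. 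All these points lie in $E_o$, so in particular $y_i-x_{i+1}\neq0$ for each $i$.

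Next I would write the semistatic relation $v(x,y)=h(x)(y-x)+g(y)$ at the two families of points appearing in the cycle. Writing $\alpha_i:=h(x_i)$ and $\beta_i:=g(y_i)$ (with $\alpha_{k+1}=\alpha_1$ since $x_{k+1}=x_1$), the points $(x_i,y_i)$ and $(x_{i+1},y_i)$ give
\[
  v(x_i,y_i)=\alpha_i(y_i-x_i)+\beta_i,
  \qquad
  v(x_{i+1},y_i)=\alpha_{i+1}(y_i-x_{i+1})+\beta_i.
\]
Subtracting the first from the second cancels the option value $\beta_i$, and dividing by the nonzero factor $y_i-x_{i+1}$ yields the affine recursion
\[
  \alpha_{i+1}=\frac{y_i-x_i}{y_i-x_{i+1}}\,\alpha_i
  +\frac{v(x_{i+1},y_i)-v(x_i,y_i)}{y_i-x_{i+1}}.
\]

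The decisive step is to compose these $k$ affine maps once around the cycle and impose $\alpha_{k+1}=\alpha_1$. This produces a single scalar equation of the form
\[
  \Bigl(1-\prod_{i=1}^{k}\frac{y_i-x_i}{y_i-x_{i+1}}\Bigr)\alpha_1=C,
\]
where $C$ is an explicit affine combination of the values of $v$ at the cycle points. Multiplying through by $\prod_{i=1}^{k}(y_i-x_{i+1})\neq0$, the coefficient of $\alpha_1$ becomes $\prod_{i=1}^{k}(y_i-x_{i+1})-\prod_{i=1}^{k}(y_i-x_i)$, i.e.\ the negative of the identifying quantity~\eqref{eq:identifying}. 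Since the cycle is identifying, this coefficient is nonzero, so $\alpha_1=h(x_1)$ is uniquely determined by $v$; then $g(y_1)=v(x_1,y_1)-\alpha_1(y_1-x_1)$ is determined as well, giving uniqueness of portfolio positions at $(x_1,y_1)$. For convergence, the very same identity applies to a pointwise convergent sequence $v_n$: the right-hand side $C_n$ converges because each $v_n$ converges at the finitely many cycle points, and dividing by the fixed nonzero coefficient gives $h_n(x_1)\to h(x_1)$ and hence $g_n(y_1)\to g(y_1)$.

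I expect the only real work to be bookkeeping: carefully unrolling the affine recursion so that the accumulated coefficient of $\alpha_1$ is exactly $\prod_{i}(y_i-x_i)/(y_i-x_{i+1})$, and checking that after clearing the (nonzero) denominator the obstruction matches~\eqref{eq:identifying} up to sign. There is no analytic subtlety, since the problem collapses to inverting a single nonzero scalar that depends only on the cycle and not on $v$; this independence of the coefficient from $v$ is exactly what makes the convergence statement come for free once uniqueness is established.
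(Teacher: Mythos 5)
Your proposal is correct and is essentially the paper's argument in a different guise: both write the semistatic constraints at the $2k$ cycle points $\{(x_i,y_i),(x_{i+1},y_i)\}$ as a linear system and exploit that the nonvanishing of \eqref{eq:identifying} makes the solution unique and a continuous linear function of the values of $v$, which delivers uniqueness and convergence simultaneously. The only difference is execution: you eliminate the option values and close the affine recursion to obtain a single scalar equation for $h(x_1)$ whose coefficient is $-\bigl(\prod_{i=1}^{k}(y_i-x_i)-\prod_{i=1}^{k}(y_i-x_{i+1})\bigr)$, whereas the paper inverts the full $2k\times 2k$ matrix after computing, via Laplace expansion, that its determinant is exactly this quantity.
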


\begin{proof}
  Consider a semistatic strategy $v(x,y)=h(x)(y-x)+ g(y)$ along the points~\eqref{eq:path},%
  $$
  v(x,y)=h(x)(y-x)+ g(y) \qforq (x,y)\in \{(x_i,y_{i}),(x_{i+1},y_{i}):\, 1\leq i\leq k\}. 
  $$
  This can be cast as the $2k\times2k$ linear system 
  $$
  	\begin{bmatrix}
	y_1 - x_1 & 1  \\
	               & 1 & y_1- x_2 \\
			  &    & y_2 - x_2 & 1 \\
			  &    &                & 1 & y_2 - x_3 \\
	               &    &                &    &     1        & \ddots  \\
			  &    &                &    &               & \ddots  & \ddots \\
	               &    &                &    &               &           & y_k - x_k & 1 \\
	 y_k - x_1 &    &                &    &               &           &               & 1
	\end{bmatrix} \hspace{-.1em}\begin{bmatrix}
	h(x_1) \\ g(y_1) \\ h(x_2) \\ g(y_2) \\ \vdots \\ h(x_k) \\ g(y_k) \end{bmatrix} \hspace{-.1em}=\hspace{-.1em}
	\begin{bmatrix}
		v(x_1, y_1) \\ v(x_2, y_1) \\ v(x_2, y_2) \\ v(x_3, y_2) \\ \vdots \\ v(x_k, y_k) \\ v(x_1, y_k)
	\end{bmatrix}
  $$
  where omitted matrix entries are zero.
  Using Laplace expansion along the first column and the convention $x_{k+1} := x_1$, we see that the determinant of the matrix is 
  \begin{align*}
     (y_1-x_1) &\left[ (y_2-x_2) \cdots (y_k-x_k) \right] \\
    + (-1)^{2k+1}(y_k-x_1) &\left[(y_1 - x_2) \cdots (y_{k-1}-x_k)\right] \\
    & = \prod_{i=1}^{k}(y_i - x_i) - \prod_{i=1}^{k} (y_i - x_{i+1}).
  \end{align*}
  As the cycle is identifying, it follows that the matrix is invertible, and the inverse map is continuous as a finite-dimensional linear map. In summary, the numbers $(h(x_{i}),g(y_{i}))_{1\leq i\leq k}$ are uniquely determined by a continuous function of the numbers $(v(x_{i},y_{i}), v(x_{i+1},y_{i}))_{1\leq i\leq k}$, showing the result.
\end{proof}

Combining Lemma~\ref{le:conn} and Lemma~\ref{le:identifying}, we have the following.

\begin{corollary}\label{co:identifying}
  Let $C \subseteq E_o$ be connected. If $C$ contains an identifying cycle, then uniqueness and convergence of portfolio positions hold on~$C$.
\end{corollary}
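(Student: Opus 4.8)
The plan is to assemble the two preceding lemmas, since the corollary is essentially their conjunction. Fix an identifying cycle $(x_i,y_i)_{i=1}^{k}$ whose points lie in~$C$, and single out its base point $(x_1,y_1)$, which in particular belongs to~$C$. First I would invoke Lemma~\ref{le:identifying}, which applies verbatim to this cycle and yields that uniqueness and convergence of portfolio positions hold at $(x_1,y_1)$.

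Next, since $C$ is connected by hypothesis and $(x_1,y_1)\in C$ is a point at which both properties hold, Lemma~\ref{le:conn} propagates each of them to every point of~$C$: running the argument for uniqueness gives uniqueness throughout~$C$, and running the (parenthetical) argument for convergence gives convergence throughout~$C$. This is exactly the asserted conclusion, so no further work is required.

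The one point that deserves a word of care is the meaning of ``$C$ contains an identifying cycle'': one must check that the base point $(x_1,y_1)$ genuinely lies in~$C$, so that Lemma~\ref{le:conn} can be applied with this point supplying its starting hypothesis. This is immediate, since all points of the cycle---being mutually connected along the list~\eqref{eq:path}---lie in the single connected component~$C$. I anticipate no real obstacle here; the substance of the result is carried entirely by Lemmas~\ref{le:conn} and~\ref{le:identifying}, and the corollary is a bookkeeping combination of the two.
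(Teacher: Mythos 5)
Your proof is correct and matches the paper's own argument, which simply combines Lemma~\ref{le:conn} and Lemma~\ref{le:identifying} exactly as you do: Lemma~\ref{le:identifying} gives uniqueness and convergence at a point of the cycle, and Lemma~\ref{le:conn} propagates both properties across the connected set~$C$. Your added remark that the cycle's points lie in~$C$ (so Lemma~\ref{le:conn} has a valid starting point) is a fine piece of bookkeeping that the paper leaves implicit.
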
 

Next, we study what happens in the absence of identifying cycles. Given a subset $S\subset\R^{2}$, we denote by~$S^{\x}$ and~$S^{\y}$ its projections onto the first and second coordinate, respectively.

\begin{proposition}\label{pr:noIdentif} 
	Let $C$ be a connected component of $E_{o}$ containing no identifying cycles. 
	\begin{enumerate}
	\item[(a)] 
	Uniqueness of portfolio positions fails at each point of~$C$. For any semistatic strategy, the set of portfolio positions is a one-parameter family. 
	\item[(b)] 
	Closedness of semistatic strategies holds on~$C$. More precisely, let $v_n(x,y) =  h_n(x) (y-x) + g_n(y)$ be semistatic strategies converging pointwise on~$C$ to $v: C \to \R$. Then there exist
	$h'_{n},g'_{n}$ such that 
	$$
	  v_n(x,y) =  h'_n(x) (y-x) + g'_n(y), \qquad (x,y)\in C
	$$
	and their pointwise limits exist,
	$$
	  h':=\lim h'_{n} \quad\mbox{on }C^{\x}, \qquad g':=\lim g'_{n}\quad\mbox{on }C^{\y}.
	$$
	In particular, $v(x,y) =  h'(x) (y-x) + g'(y)$ on~$C$. 
	
	The positions $h'_{n},g'_{n}$ can be constructed as follows. Fix an arbitrary point $(x_0, y_0) \in C$. Then there exist unique functions $a, b: \R \to \R\setminus\{0\}$ such that $a(x) b(y) = y - x$ on $C$ and $a(x_{0})=1$. We can choose 
	$$
	  h'_n(x):=h_n(x) - \frac{h_n(x_0)}{a(x)}, \qquad   g'_n(y):=g_n(y) + h_n(x_0) b(y).
	$$
	\end{enumerate}
\end{proposition}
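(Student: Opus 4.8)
The plan is to build everything on a multiplicative factorization of the increment $y-x$ along $C$, which is where the absence of identifying cycles enters. First I would establish the existence and uniqueness of the functions $a,b$ claimed at the end of the statement: there are $a:C^{\x}\to\R\setminus\{0\}$ and $b:C^{\y}\to\R\setminus\{0\}$ with $a(x)b(y)=y-x$ on $C$ and $a(x_0)=1$ (extended off $C^{\x},C^{\y}$ by, say, the constant~$1$). I would construct $a,b$ by propagation along paths: set $a(x_0)=1$, $b(y_0)=y_0-x_0$, and whenever a path step keeps $y$ fixed and moves $x\to x'$ define $a(x')=(y-x')/b(y)$, while a step keeping $x$ fixed and moving $y\to y'$ defines $b(y')=(y'-x)/a(x)$. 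Since $C\subseteq E_o$, every increment $y-x$ is nonzero, so each value produced is nonzero; and because $C$ is connected, every point is reachable from $(x_0,y_0)$, so this defines $a,b$ on all of $C^{\x},C^{\y}$.

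The crux --- and the step I expect to be the main obstacle --- is well-definedness of this propagation, i.e.\ path-independence. I would phrase the factorization as a discrete potential problem on the bipartite graph whose vertices are $C^{\x}\sqcup C^{\y}$ and whose edges are the pairs $(x,y)\in C$ with weight $y-x$; solvability of $a(x)b(y)=y-x$ is equivalent to the alternating product of the weights being $1$ around every cycle of this graph. A simple cycle here has exactly the form $x_1,y_1,x_2,y_2,\dots,x_k,y_k,x_1$, i.e.\ it traverses the points $(x_i,y_i)$ and $(x_{i+1},y_i)$ of a cycle in the sense of Definition~\ref{de:cycle}, and the alternating-product condition reads $\prod_i(y_i-x_i)=\prod_i(y_i-x_{i+1})$ --- precisely the statement that the cycle is \emph{not} identifying. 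As $C$ contains no identifying cycles, the condition holds on every simple cycle, hence (since any closed walk decomposes into simple cycles) the propagation is consistent and $a,b$ are well defined; uniqueness given $a(x_0)=1$ follows because the same relations force all values.

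With $a,b$ in hand, part~(a) is a short computation. For any representation $v(x,y)=h(x)(y-x)+g(y)$ on $C$ and any $\lambda\in\R$, using $\tfrac1{a(x)}(y-x)=b(y)$ one checks that $(h+\lambda/a,\ g-\lambda b)$ yields the same $v$; as $1/a$ and $b$ never vanish, distinct $\lambda$ give distinct positions, so uniqueness fails and we have at least a one-parameter family. Conversely, if $v=h(x)(y-x)+g(y)=\tilde h(x)(y-x)+\tilde g(y)$ on $C$, then with $p=h-\tilde h$ and $q=\tilde g-g$ we get $p(x)a(x)\,b(y)=q(y)$, i.e.\ $p(x)a(x)=q(y)/b(y)$ on $C$; since one side depends only on $x$ and the other only on $y$, connectedness of $C$ (as in the proof of Lemma~\ref{le:conn}) forces both to equal a single constant $\lambda$, whence $p=\lambda/a$ and $q=\lambda b$. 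This shows the family is exactly one-parameter.

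Finally, for part~(b) I would take the proposed $h'_n=h_n-h_n(x_0)/a$ and $g'_n=g_n+h_n(x_0)b$; these are the modification above with $\lambda=-h_n(x_0)$, so $v_n(x,y)=h'_n(x)(y-x)+g'_n(y)$ on $C$, and the normalization $a(x_0)=1$ gives $h'_n(x_0)=0$. Hence $g'_n(y_0)=v_n(x_0,y_0)\to v(x_0,y_0)$, providing convergence at the base point. I would then propagate convergence along paths exactly as in the proof of Lemma~\ref{le:conn}: knowing that $g'_n(y)$ converges and $(x',y)\in C$ gives $h'_n(x')=(v_n(x',y)-g'_n(y))/(y-x')$, which converges; knowing that $h'_n(x)$ converges and $(x,y')\in C$ gives $g'_n(y')=v_n(x,y')-h'_n(x)(y'-x)$, which converges. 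By connectedness every point of $C$ is reached, so $h':=\lim h'_n$ exists on $C^{\x}$ and $g':=\lim g'_n$ exists on $C^{\y}$, and passing to the limit in the representation yields $v(x,y)=h'(x)(y-x)+g'(y)$ on $C$.
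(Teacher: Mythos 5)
Your proposal is correct and follows essentially the same route as the paper: both hinge on the factorization $a(x)b(y)=y-x$ on $C$ obtained from the absence of identifying cycles, exhibit the same one-parameter family $(h+\lambda/a,\;g-\lambda b)$ for part~(a), and prove part~(b) by propagating convergence through the connected component starting from $h_n'(x_0)=0$, exactly as in Lemma~\ref{le:conn}. The only differences are minor: where the paper simply invokes the Borwein--Lewis theorem (Lemma~\ref{le:BorweinLewis}) for the existence and uniqueness of $a,b$, you re-prove it by path propagation and a cycle-consistency argument (legitimate, and in fact the hypothesis of no identifying cycles already covers all closed walks, since the paper's cycles allow repeated points), and where the paper first passes to the additive decomposition $\bar v_n=\bar h_n+\bar g_n$ before applying the connectedness argument, you propagate convergence of $h_n',g_n'$ directly in the multiplicative form.
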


Before stating the proof, we recall the Borwein--Lewis characterization for the decomposability of a function of two variables into a product of single-variable functions. More generally, this result applies to group-valued functions on arbitrary sets; see~\cite[Theorem~3.3]{BorweinLewis.92}. %

\begin{lemma}\label{le:BorweinLewis}
  Let $S\subset\R\times \R$ and $c: S \to \R \setminus \{0\}$. The following are equivalent:
	\begin{enumerate}
		\item There exist $a,b: \R\to\R\setminus \{0\}$ such that
		$$
		c(x,y) = a(x) b(y), \qquad (x,y)\in S.
		$$
	  \item For any cycle $(x_i,y_{i})_{i=1}^{k}$ in $S$,
      \begin{equation} \label{eq: Borwein-Lewis cond}
	      \prod_{j=1}^{k} c(x_j, y_j) = \prod_{j=1}^{k} c(x_{j+1}, y_j),
      \end{equation}
where $x_{k+1} := x_1$.
	\end{enumerate}
  In that case, on each connected component of $S$, the functions $a$ and $b$ are unique up on a scalar multiple.
\end{lemma}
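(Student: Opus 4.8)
The plan is to prove the two implications separately and treat uniqueness at the end, working throughout on a single connected component of $S$ (in the sense of Definition~\ref{de:connected}, with $E_{o}$ replaced by $S$) and patching afterwards. A useful preliminary observation is that distinct connected components of $S$ have disjoint $\x$-projections and disjoint $\y$-projections: if $x\in C_{1}^{\x}\cap C_{2}^{\x}$, then there are points $(x,y_{1})\in C_{1}$ and $(x,y_{2})\in C_{2}$, and since these differ only in the second coordinate they are connected, forcing $C_{1}=C_{2}$; symmetrically for the $\y$-projections. Hence it suffices to construct $a,b$ on $C^{\x},C^{\y}$ for each component $C$ separately and to set $a\equiv 1$, $b\equiv 1$ off all projections. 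The direction (i)$\Rightarrow$(ii) is then a direct substitution of $c(x,y)=a(x)b(y)$ into~\eqref{eq: Borwein-Lewis cond}: the left-hand side becomes $\prod_{j}a(x_{j})b(y_{j})$ and the right-hand side $\prod_{j}a(x_{j+1})b(y_{j})$, and since $x_{k+1}=x_{1}$ the factors $\prod_{j}a(x_{j+1})$ and $\prod_{j}a(x_{j})$ agree by a cyclic relabeling while the $b$-factors are identical.

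The substance is (ii)$\Rightarrow$(i). I would fix a connected component $C$ and a base point $(x_{0},y_{0})\in C$, normalize $a(x_{0}):=1$, and set $b(y_{0}):=c(x_{0},y_{0})$. For any other point, take a path from $(x_{0},y_{0})$ as in~\eqref{eq:path}; consecutive entries differ in exactly one coordinate, and the relation $c=ab$ forces the new factor from the previously defined one at each step (by dividing the value $c$ at the current point by the factor already fixed). This propagates $a$ to all of $C^{\x}$ and $b$ to all of $C^{\y}$, with the value at each coordinate given by an alternating product of the numbers $c(\cdot,\cdot)$ visited along the path. Since $c$ takes values in $\R\setminus\{0\}$, every such value is nonzero, so $a,b$ map into $\R\setminus\{0\}$; and because each point of $C$ lies on such a path, the identity $c(x,y)=a(x)b(y)$ holds on all of $C$ by construction.

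The main obstacle is well-definedness: the alternating product defining $a(x')$ (or $b(y')$) must not depend on the chosen path, and this is exactly where hypothesis~(ii) is used. Given two paths from $(x_{0},y_{0})$ both reaching the coordinate $x'$, I would concatenate one with the reversal of the other to form a cycle in $S$; the quotient of the two candidate values for $a(x')$ is then governed by the cycle ratio $\prod_{j}c(x_{j},y_{j})/\prod_{j}c(x_{j+1},y_{j})$ appearing in~\eqref{eq: Borwein-Lewis cond}, which equals $1$ by assumption. Working directly in the multiplicative group $\R\setminus\{0\}$, rather than passing to logarithms, is what lets the argument absorb signs automatically, so no separate parity bookkeeping is needed. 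The one point demanding care is to arrange the concatenation as an admissible cycle of alternating single-coordinate moves; this is routine, since a point may be repeated in the list (as noted after Definition~\ref{de:connected}), allowing one to insert trivial steps to align the coordinates being changed.

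Finally, for uniqueness up to a scalar, suppose $c=ab=\tilde a\tilde b$ on a connected component. Then $\tilde a(x)/a(x)=b(y)/\tilde b(y)$ whenever $(x,y)\in S$; the left-hand side depends only on $x$ and the right-hand side only on $y$, so this common ratio is unchanged along every single-coordinate move and is therefore constant, say equal to $\lambda\in\R\setminus\{0\}$, throughout the component by connectedness. Hence $\tilde a=\lambda a$ and $\tilde b=\lambda^{-1}b$ on that component, which is the asserted uniqueness.
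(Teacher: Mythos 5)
Your proof is correct, but note that the paper does not actually prove this lemma at all: it is quoted from Borwein--Lewis \cite[Theorem~3.3]{BorweinLewis.92}, where it is established in greater generality for group-valued functions. Your argument---fix a base point in each connected component, normalize $a(x_0)=1$, propagate $a$ and $b$ along paths by alternating division, and invoke the cycle condition \eqref{eq: Borwein-Lewis cond} to get path-independence of the propagated values---is essentially the classical proof of that theorem, specialized to the multiplicative group $\R\setminus\{0\}$, so it is a legitimate self-contained substitute for the citation. Two details you include are genuinely necessary and easy to overlook: first, that distinct connected components have disjoint $\x$- and $\y$-projections, which is what allows the component-by-component construction to patch into globally defined $a,b:\R\to\R\setminus\{0\}$ without conflict; second, that one must work multiplicatively in $\R\setminus\{0\}$ rather than pass to logarithms---this is not a stylistic preference but a necessity, since in the paper's application $c(x,y)=y-x$ takes negative values. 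Your uniqueness argument (the ratio $\tilde a/a=b/\tilde b$ depends on neither coordinate along a single-coordinate move, hence is constant on a component by connectedness) is also the standard one and matches the claim as stated. What the paper's citation buys is brevity and generality; what your proof buys is self-containedness, and it meshes well with the rest of the paper, since the same propagation-along-paths idea already appears in the proofs of Lemma~\ref{le:conn} and Proposition~\ref{pr:noIdentif}.
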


\begin{proof}[Proof of Proposition~\ref{pr:noIdentif}.]
	The absence of identifying cycles means that~\eqref{eq: Borwein-Lewis cond} holds for the function $c(x,y):=y-x$ on~$C$. Note that~$c$ is valued in~$\R\setminus \{0\}$ due to $C\subset E_{o}$. Hence Lemma~\ref{le:BorweinLewis} implies that there exist $a, b: \R \to \R\setminus\{0\}$ such that $a(x) b(y) = y - x$ on~$C$, and these functions are uniquely determined by the normalization that $a(x_{0})=1$ for some fixed $x_0\in C^{\x}$.
	
 Consider a semistatic strategy $v(x,y) = h(x)(y-x) + g(y)$ on~$C$. Given $\alpha\in\R$, let $h_\alpha(x):=h(x)+\alpha/a(x)$ and $g_\alpha(y):=g(y)-\alpha b(y)$. Then
  \begin{align*}
	 h_\alpha(x) (y-x) + g_\alpha(y) 
	 & = [h(x)+\alpha/a(x)] a(x) b(y) + g(y)-\alpha b(y) \\
	 & = h(x) a(x)b(y) + g(y) = v(x,y),
  \end{align*} 	
showing that the portfolio positions inducing~$v$ include the one-parameter family $(h_\alpha,g_\alpha)_{\alpha\in\R}$.
	Conversely, by connectedness, we know that portfolio positions are uniquely determined as soon as the option position is determined at one point~$y_{0}$. Because $\alpha\mapsto g_\alpha(y_{0})=g(y_{0})-\alpha b(y_{0})$ is surjective onto~$\R$, this shows that $(h_\alpha,g_\alpha)_{\alpha\in\R}$ exhausts all portfolio positions inducing~$v$.
	
	Turning to the convergence, note that 
\begin{equation*}
	\bar{v}_{n}(x,y):=\frac{v_n(x,y)}{b(y)} = h_n(x) a(x) + \frac{g_n(y)}{b(y)}
\end{equation*}
on~$C$. Define
	$$
		 \bar{h}_n(x) :=h_n(x) a(x) - h_n(x_0) a(x_0), \qquad \bar{g}_n(y) := \frac{g_n(y)}{b(y)} + h_n(x_0) a(x_0),
	$$
so that 
  \begin{equation}\label{eq:additiveDecomp}
  \bar{v}_{n}(x,y) = \bar{h}_n(x) + \bar{g}_n(y).
  \end{equation}
Clearly $\bar{v}_n(x,y)$ is convergent for all $(x,y)\in C$. Moreover, $\bar{h}_n(x_0) = 0$ for all $n$, so that $\bar{h}_n(x_0)$ is convergent. As~$C$ is connected, the additive decomposition~\eqref{eq:additiveDecomp} implies as in the proof of Lemma~\ref{le:conn} (or  \cite{BorweinLewis.92}) that the separate limits $\bar{g}(y):= \lim_n \bar{g}_n(y)$  and $\bar{h}(x):= \lim_n \bar{h}_n(x)$ exist for all $(x,y)\in C$. 
It follows that 
\begin{align*}
  h'_n(x)&=\bar{h}_{n}(x)/a(x) \;\to\; \bar{h}(x)/a(x)=h'(x), \\
  g'_n(y)&=\bar{g}_{n}(y)b(y) \;\to\; \bar{g}(y)b(y)=g'(y),
\end{align*}  
completing the proof.
\end{proof}

We can now prove the main result.

\begin{proof}[Proof of Theorem \ref{th:main1}]
  Let $v_n(x,y) =  h_n(x) (y-x) + g_n(y)$ be semistatic strategies converging pointwise on~$E$ to $v: E \to \R$. We shall construct $h: E^{\x}\to \R$ and $g: E^{\y}\to \R$ such that $v(x,y) = h(x) (y-x) + g(y)$ on $E$. 

  Recall the partition $(C_{\gamma})_{\gamma\in\Gamma}$ of $E_{o}$ into connected components. The definition of $\sim$ implies that for each $\gamma\in\Gamma$,
  $$
    C_{\gamma} = \big(C_{\gamma}^{\x} \times C_{\gamma}^{\y} \big)\cap E_{o} \subset \big(C_{\gamma}^{\x} \times C_{\gamma}^{\y} \big)\cap E;
  $$
  the last inclusion can be strict as $C_{\gamma}^{\x} \times C_{\gamma}^{\y}$ can contain points from the diagonal~$E_{d}$. Conversely, some points from the diagonal may not pertain to~$C_{\gamma}^{\x} \times C_{\gamma}^{\y}$ for any~$\gamma\in\Gamma$; these points form the set
  $$
    N := E\setminus \bigcup_{\gamma\in\Gamma}\big(C_{\gamma}^{\x} \times C_{\gamma}^{\y} \big) \subseteq E_d.
  $$
  
  Let $(D_{j})_{j\in J}$ be the collection consisting of all singletons $\{(x,y)\}$ with $(x,y)\in N$ as well as the sets $\big(C_{\gamma}^{\x} \times C_{\gamma}^{\y} \big)\cap E$, $\gamma\in\Gamma$. Note that two points in $E_{o}$ sharing one coordinate are necessarily connected, and two points in $E_{d}$ sharing one coordinate must coincide. This implies that $(D_{j})_{j\in J}$ is a partition of~$E$ with the following property: If $(x,y),(x,y')\in E$, then $(x,y)$ and $(x,y')$ belong to the same component~$D_{j}$. Similarly, if $(x,y),(x',y)\in E$, then $(x,y)$ and $(x',y)$ belong to the same component~$D_{j}$. As a consequence, we may construct the positions $h,g$ separately on each~$D_{j}$ without danger of creating any inconsistencies. (This is not true for $\{C_{\gamma},E_{d}\}$, whence the need for yet another collection.)
  \begin{enumerate}
   \item
   Let $D_{j} = (C_{\gamma}^{\x} \times C_{\gamma}^{\y})\cap E$ for some $\gamma\in\Gamma$, where $C_{\gamma}$ contains an identifying cycle.  Then we can choose $(h,g):=\lim (h_{n},g_{n})$ on $(C_{\gamma}^{\x}, C_{\gamma}^{\y})$ according to Corollary~\ref{co:identifying}.
   \item
   Let $D_{j} = (C_{\gamma}^{\x} \times C_{\gamma}^{\y})\cap E$ for some $\gamma\in\Gamma$, where $C_{\gamma}$ does not contain an identifying cycle. Then we can choose $(h,g)$ on $(C_{\gamma}^{\x}, C_{\gamma}^{\y})$ according to Proposition~\ref{pr:noIdentif}.
   \item 
   Let $D_{j} = \{(x,y)\}$ for some $(x,y)\in N \subseteq E_d$. Then we can define $h(x)=0$ and $g(y)=v(x,y)$. In fact, as $y-x=0$, any choice for $h(x)$ will do. \qedhere
  \end{enumerate}  
\end{proof}

The preceding analysis also quantifies the non-uniqueness for portfolio positions; that is, the exact number of degrees of freedom in choosing the portfolio positions for any given semistatic strategy.

\begin{remark}\label{rk:degreesOfFreedom}
  Given any semistatic strategy~$v$ on~$E$, the set of all portfolio positions $(h,g)$ with $v(x,y) = h(x) (y-x) + g(y)$ for all $(x,y)\in E$ is a $k$-parameter family, where
  $$
  k = \card \big(N\cup\{\gamma\in\Gamma:\, C_{\gamma} \mbox{ contains no identifying cycle}\}\big).
  $$
\end{remark}

\section{General Probabilistic Model}\label{se:multiperiod}

Given integers $d \geq 1$ and $T \geq 2$, we denote by $X = (X_t)_{t=1}^T$ the canonical process on $(\R^d)^T$, where $X_t = (X_{t,j})_{j=1}^d$ are interpreted as the prices of~$d$ stocks at date~$t$. We also fix a probability measure~$P$ on~$(\R^d)^T$; only the nullsets of~$P$ will matter for our results. For the purposes of this section, a \emph{semistatic strategy} is a random variable~$V$ satisfying
\begin{equation} \label{eq: multiperiod v}
	V =
	\sum_{t=1}^{T-1} \sum_{j=1}^{d} \hat h_{t, j}(X_1,\dots,X_t) \,(X_{t+1, j} - X_{t, j}) + \sum_{j=1}^d \hat g_j(X_{T, j}) 
\end{equation}
 $P$-a.s.\ for some real-valued measurable functions~$\hat h_{t,j}$ and~$\hat g_{j}$. It will be notationally convenient to work instead with the random variables
$$
  h_{t, j}:=\hat h_{t, j}(X_1,\dots,X_t), \qquad g_{j}:=\hat g_j(X_{T, j}).
$$
We call $h=(h_{t,j})$ the stock position and $g=(g_{j})$ the option position, respectively. Together, they form the portfolio position~$(h,g)$ of~$V$. The portfolio position is not uniquely determined by~$V$ in general: it is clearly possible to add a constant to $g_{1}$ and subtract the same from $g_{j}$ for any~$j\neq1$, without affecting~$V$. These $d-1$ degrees of freedom are easily removed by fixing an ``anchor'' point $x^0 \in (\R^d)^T$ and normalizing
\begin{equation} \label{eq: initial cond}
	g_j ( x^0 ) = 0, \quad j = 2, \dots, d.
\end{equation}
In the context of Theorem~\ref{thm: multiperiod} below, it will be shown that the anchor point can be chosen arbitrarily outside a certain nullset. 

In this probabilistic setting, we say that \emph{uniqueness of portfolio positions} holds if for any semistatic strategy $V$, after a normalization of the form~\eqref{eq: initial cond}, the portfolio position $(h, g)$ is uniquely determined $P$-a.s. \emph{Convergence of portfolio positions} holds if, after a normalization of the form~\eqref{eq: initial cond}, for any semistatic strategies $V^{(n)}$ converging $P$-a.s., the corresponding  positions $(h^{(n)}, g^{(n)})$ also converge $P$-a.s.

The aim of this section  is to exhibit a probabilistic condition circumventing some of the complications highlighted in Section~\ref{se:2period}. To that end, let $\mu_{t,j}$ be the law of $X_{t,j}$, or equivalently, the one-dimensional marginal law of~$P$ on the component~$(t,j)$. We assume throughout that~$\mu_{t,i}$ is not a Dirac measure for any $t,j$. (This serves to simplify the exposition; while the results and arguments could be generalized, the degenerate case is not relevant financially and hence omitted.) The key condition for our result is that~$P$ be measure-theoretically equivalent to the product of its marginals.

\begin{theorem} \label{thm: multiperiod}
  Suppose that $P \sim \otimes_{t=1}^T \otimes_{j=1}^d \mu_{t,j}$. 
	Then the set of semistatic strategies is closed under $P$-a.s.\ convergence. 
Moreover, uniqueness and convergence of portfolio positions hold.
\end{theorem}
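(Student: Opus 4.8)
The plan is to recast the multi-period, multi-stock problem as a finite linear-algebra question and then, exactly as in Section~\ref{se:2period}, to read off uniqueness, convergence, and closedness from the nonsingularity of the associated matrix. First I would replace $P$ by the product measure $Q:=\otimes_{t=1}^{T}\otimes_{j=1}^{d}\mu_{t,j}$. Since $P\sim Q$, every ``$P$-a.s.''\ statement is equivalent to the corresponding ``$Q$-a.s.''\ statement, and under $Q$ the coordinates $(X_{t,j})_{t,j}$ are independent with non-degenerate marginals; this independence is what makes the increments tractable and plays the role that the explicit geometry of $E$ played in the two-period analysis. I would also observe at the outset that closedness follows once convergence of portfolio positions is established: if $V^{(n)}\to V$ $Q$-a.s.\ with normalized positions $(h^{(n)},g^{(n)})\to(h,g)$ $Q$-a.s., then passing to the limit in the identity
\[
V^{(n)}=\sum_{t=1}^{T-1}\sum_{j=1}^{d}h^{(n)}_{t,j}(X_{t+1,j}-X_{t,j})+\sum_{j=1}^{d}g^{(n)}_{j}
\]
exhibits $V$ as a semistatic strategy, the limiting positions having the required functional form because an a.s.\ limit of $\sigma(X_{1},\dots,X_{t})$-measurable (resp.\ $\sigma(X_{T,j})$-measurable) functions admits such a version under $Q$. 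Thus the theorem reduces to uniqueness together with convergence.

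For the core I would determine the positions locally by a \emph{peeling} argument that is the multi-period incarnation of the rectangle computation in Lemma~\ref{le:identifying}. Assuming $V=0$ $Q$-a.s., I fix $(x_{1},\dots,x_{T-1})$ and vary only $x_{T,j}$: the equation forces $g_{j}(x_{T,j})$ to be affine in $x_{T,j}$ with slope $-\hat h_{T-1,j}(x_{1},\dots,x_{T-1})$, and since $g_{j}$ does not depend on earlier coordinates, $\hat h_{T-1,j}$ must be $Q$-a.s.\ a constant $c_{j}$. Substituting back removes $x_{T}$ and leaves a strategy over dates $1,\dots,T-1$ plus a residual linear term $-\sum_{j}c_{j}x_{T-1,j}$; peeling date $T-1$ then forces $\hat h_{T-2,j}\equiv c_{j}$, and inductively $\hat h_{t,j}\equiv c_{j}$ for all $t$, until at date $1$ the independence and non-degeneracy of the first marginals force $c_{j}=0$. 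Hence $h\equiv 0$ and each $g_{j}$ is constant, and the normalization~\eqref{eq: initial cond} together with the constraint $V=0$ pins these constants to zero; this is uniqueness. Organized differently, the same computation assembles into a single linear system expressing the positions at a generic grid of support points as the solution of a matrix equation whose right-hand side is a vector of values of $V$, and uniqueness is precisely the invertibility of this matrix, which I would isolate and prove in Section~\ref{se:fullRank}.

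Convergence then follows from the linear-algebra formulation exactly as invertibility yielded convergence in Lemma~\ref{le:identifying}: solving the full-rank system expresses each position as a fixed linear (hence continuous) function of finitely many values of $V$, so convergence of $V^{(n)}$ at the grid points forces the positions to converge. The delicate point here is measure-theoretic: a.s.\ convergence of $V^{(n)}$ does not guarantee convergence on any prescribed finite grid, so I would use a Fubini/selection argument, valid because $Q$ is a product and the matrix is invertible for a.e.\ choice of grid, to produce for $Q$-a.e.\ $\omega$ a grid through $\omega$ on which all $V^{(n)}$ converge. Combined with the preceding paragraph this gives convergence $Q$-a.s., hence $P$-a.s., and therefore closedness by the reduction already noted.

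The main obstacle I expect is the full-rank claim in full generality. For a single rectangle the determinant is the simple expression in Definition~\ref{de:cycle}, but with $T$ dates and $d$ stocks the increments couple the variables across both time and assets, so the relevant matrix is large and intricately structured; showing that it is nonsingular for generic support points---equivalently, that the only way to write $0$ as a semistatic strategy is the trivial one---is the combinatorial heart of the argument and is the content of Section~\ref{se:fullRank}. A secondary difficulty is the measure-theoretic bookkeeping: each peeling step conditions on earlier coordinates and discards a nullset, and the selection of convergent grids must be arranged so that only a $Q$-nullset is lost in total.
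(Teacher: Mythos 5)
Your proposal is, in its skeleton, the paper's own proof: reduce everything to a finite linear system built from the values of $V$ on a product grid (the paper's cuboid $D$ in \eqref{eq: D}), prove that this system has full column rank (the paper's Lemma~\ref{lemma: rank}, proved in Section~\ref{se:fullRank}, which you likewise defer), use continuity of the inverse of a finite-dimensional injective linear map to turn convergence of $V^{(n)}$ on the grid into convergence of the normalized positions (the paper's Proposition~\ref{prop: multiperiod D}), and use a Fubini argument on the product measure to thread such grids through $Q$-a.e.\ point of the a.s.\ convergence set (the paper's Lemma~\ref{lemma: multiperiod anchor}). The one genuinely different ingredient is your peeling induction for uniqueness: fix dates $1,\dots,T-1$, vary $x_{T,j}$ to force $g_j$ to be affine with slope $-\hat h_{T-1,j}$, conclude that $\hat h_{T-1,j}$ is a.s.\ constant, substitute back and descend through the dates. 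The paper instead obtains uniqueness as a byproduct of the cuboid system. Your route buys more than you give it credit for: run verbatim on a finite cuboid (two values per coordinate, so ``affine'' just means determined by a difference quotient), the same peeling induction proves exactly the injectivity assertion of Lemma~\ref{lemma: rank}---the ``combinatorial heart'' you flag as the main obstacle---and does so far more economically than the paper's four-step matrix reduction in Section~\ref{se:fullRank}. So the obstacle you fear is one you have essentially already overcome.

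Two details in your convergence step are under-specified and should be made explicit. First, the grid must pass through the anchor $x^0$ of the normalization \eqref{eq: initial cond}, not merely through $\omega$: on a grid avoiding $x^0$, the values of $V$ determine the positions only up to the $(d-1)$-dimensional kernel of constants added to the $g_j$, so convergence of $V^{(n)}$ there does not force convergence of the normalized positions. This is precisely what Lemma~\ref{lemma: multiperiod anchor} arranges: for a.e.\ anchor $x^0$ in the convergence set $A$, a.e.\ $x\in A$ spans with $x^0$ a cuboid contained in $A$. Second, when $\omega$ shares some coordinates with $x^0$, the cuboid spanned by $x^0$ and $\omega$ degenerates and the full-rank lemma does not apply to it directly; the paper repairs this with an auxiliary point $x^1$ (available because no $\mu_{t,j}$ is a Dirac measure) that agrees with $\omega$ wherever $\omega$ differs from $x^0$ and differs from $x^0$ in every coordinate, so that $\omega$ lies in the nondegenerate cuboid spanned by $x^0$ and $x^1$. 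Neither point invalidates your plan, but both belong in a complete write-up.
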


As mentioned in the Introduction, $P \sim \otimes_{t=1}^T \otimes_{j=1}^d \mu_{t,j}$ intuitively means that none of the (a priori possible) future stock prices becomes completely impossible given an intermediate state of the prices, similarly as in the condition of conditional full support~\cite{GuasoniRasonyiSchachermayer.08}. Technically, we shall see that for such~$P$, any set of full measure contains an abundance of cuboids that will play the role of identifying cycles (cf.\ Section~\ref{se:2period}). Indeed, fix $x^0, x^1 \in (\R^d)^T$ such that $x_{t,j}^0 \neq x_{t,j}^1$ for all $t, j$ and consider the cuboid~$D$ generated by their components, 
\begin{equation} \label{eq: D}
	D := \prod_{t=1}^T \prod_{j=1}^d \left\{ x_{t, j}^0, x_{t, j}^1 \right\} \subseteq (\R^d)^T.
\end{equation}
That is, each point in~$D$ is a matrix $(x_{t, j}^{\eps_{t,j}})_{t,j}$ where $\eps_{t,j}\in\{0,1\}$.

\begin{proposition} \label{prop: multiperiod D}
Let $V$ and $V^{(n)}$ be semistatic strategies with portfolio positions $(h,g)$ and $(h^{(n)}, g^{(n)})$, respectively, such that~\eqref{eq: multiperiod v} and~\eqref{eq: initial cond} hold on~$D$. Then $(h,g)$ are uniquely determined on $D$. Moreover, $V^{(n)} \to V$ pointwise on $D$ implies that $(h^{(n)}, g^{(n)}) \to (h,g)$ pointwise on $D$.
\end{proposition}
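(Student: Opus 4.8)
The plan is to turn the statement into a finite-dimensional linear-algebra problem supported on the $2^{dT}$ points of~$D$, and then appeal to the rank computation carried out in Section~\ref{se:fullRank}. Since $D$ is a finite set, on $D$ the portfolio position $(h,g)$ is encoded by finitely many real numbers: the value of $\hat h_{t,j}$ at each partial configuration $(\eps_{s,i})_{s\le t,\,1\le i\le d}\in\{0,1\}^{dt}$ (for $1\le t\le T-1$ and $1\le j\le d$), together with the two values $\hat g_j(x^0_{T,j})$ and $\hat g_j(x^1_{T,j})$ for each~$j$. Collecting these into a single vector $z\in\R^{N}$, the defining identities~\eqref{eq: multiperiod v} read off at each point of~$D$, augmented by the $d-1$ normalization constraints~\eqref{eq: initial cond}, become a linear system $Mz=w$. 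Here $w$ lists the values $V(p)$ over $p\in D$ followed by zeros in the rows coming from~\eqref{eq: initial cond}, and $M$ is a fixed matrix whose entries are the increments $x^{\eps_{t+1,j}}_{t+1,j}-x^{\eps_{t,j}}_{t,j}$ multiplying the stock positions and the constants $0$ and $1$ multiplying the option positions. The system may well be overdetermined, but this is immaterial: only injectivity of $z\mapsto Mz$ will be needed.

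The crux is the claim that $M$ has full column rank, equivalently that its kernel is trivial; this is precisely what is established in Section~\ref{se:fullRank}. Heuristically full rank should hold because every edge increment $x^1_{t,j}-x^0_{t,j}$ of the cuboid is nonzero (we chose $x^0_{t,j}\neq x^1_{t,j}$), so that $M$ plays the role of the invertible matrix attached to an identifying cycle in Lemma~\ref{le:identifying}, and $D$ is the multi-period, multi-stock analogue of the nondegenerate rectangle. The $d-1$ normalization rows serve to remove exactly the $d-1$ degrees of freedom obtained by shifting a constant between the option positions $g_j$, as noted before~\eqref{eq: initial cond}. I expect this rank verification to be the main obstacle: the nested dependence of $\hat h_{t,j}$ on the entire history $(X_1,\dots,X_t)$ produces exponentially many unknowns and an intricate block structure, which is exactly why it is isolated as a separate result.

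Granting the full-rank property, both conclusions follow by routine linear algebra. For uniqueness, a trivial kernel makes $z\mapsto Mz$ injective, so $Mz=w$ has at most one solution; since $(h,g)$ is a genuine portfolio position for~$V$ on~$D$, a solution exists and is therefore the only one, giving that $(h,g)$ is uniquely determined on~$D$. For convergence, full column rank furnishes a fixed left inverse $M^{+}$ with $M^{+}M=I$. Writing $z^{(n)},w^{(n)}$ for the data attached to $V^{(n)}$, the fact that each $V^{(n)}$ is a genuine strategy gives $z^{(n)}=M^{+}Mz^{(n)}=M^{+}w^{(n)}$, and likewise $z=M^{+}w$. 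Since $V^{(n)}\to V$ pointwise on the finite set~$D$ we have $w^{(n)}\to w$ (the normalization rows being identically zero), and continuity of the linear map $M^{+}$ yields $z^{(n)}=M^{+}w^{(n)}\to M^{+}w=z$, that is, $(h^{(n)},g^{(n)})\to(h,g)$ pointwise on~$D$.
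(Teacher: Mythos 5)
Your proposal is correct and follows essentially the same route as the paper: both recast \eqref{eq: multiperiod v} on the cuboid $D$ as a finite linear system whose unknowns are the values of $(h,g)$ at the points of $D$, invoke the full-column-rank property established in Section~\ref{se:fullRank} (Lemma~\ref{lemma: rank}), and conclude uniqueness and convergence from injectivity and continuity of a (left) inverse of the system matrix. The only cosmetic difference is that you retain all $2d$ option variables and append $d-1$ normalization rows coming from \eqref{eq: initial cond}, whereas the paper uses \eqref{eq: initial cond} to eliminate those variables outright, yielding the matrix $L_T$ with $d+1$ option columns; the two full-rank statements are trivially equivalent.
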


The proof is lengthy and deferred to Section~\ref{se:fullRank}. In a nutshell, we view~\eqref{eq: multiperiod v} as linear system where the values of $h$ and $g$ at points in $D$ are the variables; each equation of the system corresponds to evaluating $V$ at a point in $D$. We prove that the (finite-dimensional) linear map associated with the system is injective, hence admits a continuous inverse. As a result, the portfolio position $(h,g)$ is a continuous function of the strategy $V$.

The next lemma is a general measure-theoretic fact; it formalizes the claim that there is an abundance of cuboids in any set of full $P$-measure.
\begin{lemma} \label{lemma: multiperiod anchor}
	Consider probability spaces $(\Omega_i, \cF_i, \mu_i)_{i=1}^n$ and their product $(\Omega, \cF, \mu)$ given by $\Omega = \prod_{i=1}^{n}\Omega_i, \ \cF = \otimes_{i=1}^{n} \cF_i$ and $\mu = \otimes_{i=1}^{n} \mu_i$. If $A \in \cF$ satisfies $\mu(A) = 1$, then $\mu$-a.e.\ $x^0 = (x_1^0, x_2^0, \dots, x_n^0) \in A$ satisfies
	$$
		\mu \left\{ x \in A: \, \prod_{i=1}^n \left\{ x_{i}^0, x_{i} \right\} \subseteq A \right\} = 1.
	$$
\end{lemma}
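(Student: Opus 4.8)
The plan is to run a Fubini argument over two independent copies $x^0, x \sim \mu$, exploiting the product structure of $\mu$ so that every ``blend'' of $x^0$ and $x$ is again $\mu$-distributed. First I would observe that the cuboid $\prod_{i=1}^n \{x_i^0, x_i\}$ consists of exactly the $2^n$ points obtained by choosing, in each coordinate, either $x_i^0$ or $x_i$. For $\eps = (\eps_1,\dots,\eps_n) \in \{0,1\}^n$, define the \emph{blended point} $z^\eps(x^0,x) \in \Omega$ by $z^\eps_i := x_i$ if $\eps_i = 1$ and $z^\eps_i := x_i^0$ if $\eps_i = 0$. Then $\prod_{i=1}^n \{x_i^0, x_i\} = \{ z^\eps(x^0,x) : \eps \in \{0,1\}^n \}$, so the containment asserted in the lemma is precisely the statement that $z^\eps(x^0,x) \in A$ for each of the finitely many $\eps$. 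Since there are only $2^n$ such $\eps$, it suffices to prove the claim for each fixed $\eps$ and intersect at the end.

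The crux of the argument is the following distributional identity: for each fixed $\eps$, the measurable map $(x^0,x) \mapsto z^\eps(x^0,x)$ pushes the product measure $\mu \otimes \mu$ on $\Omega \times \Omega$ forward to $\mu$. This is exactly where the hypothesis that $\mu$ is itself a product measure is indispensable. Under $\mu \otimes \mu$ the $2n$ coordinates $(x_i^0)_{i=1}^n$ and $(x_i)_{i=1}^n$ are mutually independent with $x_i^0 \sim \mu_i$ and $x_i \sim \mu_i$; the point $z^\eps$ selects, for each index $i$, exactly one of the two independent copies of coordinate $i$ (namely $x_i$ when $\eps_i=1$ and $x_i^0$ when $\eps_i=0$). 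No coordinate is reused across distinct $i$, so $(z^\eps_i)_{i=1}^n$ is a family of independent coordinates with the correct marginals $\mu_i$, which is to say $z^\eps(x^0,x) \sim \mu$. Consequently $(\mu \otimes \mu)\{(x^0,x) : z^\eps(x^0,x) \in A\} = \mu(A) = 1$, the preimage $(z^\eps)^{-1}(A)$ being a measurable subset of $\Omega \times \Omega$ by joint measurability of $z^\eps$.

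With this in hand the conclusion is routine bookkeeping. Applying Fubini to the $\mu\otimes\mu$-null set $(\Omega\times\Omega)\setminus (z^\eps)^{-1}(A)$, for $\mu$-a.e. $x^0$ we get $\mu\{x : z^\eps(x^0,x)\in A\} = 1$. Intersecting the corresponding full-measure sets of anchors over the finitely many $\eps \in \{0,1\}^n$ yields a single $\mu$-full set of $x^0$ (which, since $\mu(A)=1$, we may take inside $A$, matching the phrasing ``$\mu$-a.e.\ $x^0 \in A$'') such that $\mu\{x : z^\eps(x^0,x)\in A\}=1$ for every $\eps$ simultaneously. For any such $x^0$, a last finite intersection over $\eps$ gives $\mu\{x : z^\eps(x^0,x)\in A \text{ for all } \eps\}=1$; since $\{x : z^\eps(x^0,x)\in A \text{ for all } \eps\} = \{x : \prod_{i=1}^n\{x_i^0,x_i\}\subseteq A\}$ and $\mu(A)=1$ lets us freely intersect with $A$ to recover the restriction $x \in A$ in the statement, this is exactly the assertion. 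I expect the only genuinely substantive point to be the pushforward identity of the second paragraph, where the product structure of $\mu$ enters; the remainder is a finite number of Fubini applications and set manipulations.
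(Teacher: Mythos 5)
Your proof is correct and is essentially the paper's own argument: your blending maps $z^\eps$ on $\Omega \times \Omega$ are exactly the paper's projections $\pi_J$ on the doubled product space $\hat\Omega = \prod_{i=1}^n (\Omega_i^0 \times \Omega_i^1)$, and the key pushforward identity (each blend has law $\mu$ because it selects $n$ distinct independent coordinates with the right marginals) is the same in both. The only cosmetic difference is order of operations: you apply Fubini separately for each $\eps$ and intersect the resulting full-measure sets of anchors, whereas the paper first intersects $S = \bigcap_J \{\pi_J \in A\}$ and applies Fubini once to $S$.
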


\begin{proof}
	Let $\Omega^0,\Omega^1$ be two copies of $\Omega$ with components denoted $\Omega_i^j$. Consider the product space $\hat{\Omega} =\prod_{i=1}^{n} (\Omega_i^0 \times \Omega_i^1)$ endowed with the product $\sigma$-field $\hat{\cF}$ and the product measure $\hat{\mu} = \otimes_{i=1}^{n} (\mu_i \otimes \mu_i)$. For each multi-index $J = (j_1, j_2, \dots, j_n) \in \{0, 1\}^n$, define the projection $\pi_{J}: \hat{\Omega} \to \prod_{i=1}^{n} \Omega_i^{j_i}$ by 
	$$
		(x_1^0, x_1^1, x_2^0, x_2^1, \dots, x_n^0, x_n^1) \longmapsto (x_1^{j_1}, x_2^{j_2}, \dots, x_n^{j_n}).
	$$
	Clearly the law of $\pi_{J}$ under $\hat{\mu}$ is~$\mu$, so that~$\hat{\mu}(\pi_{J} \in A) =\mu(A)=1$.
	Defining
	$
		S:= \cap \{ \pi_{J} \in A \}
	$
	as the intersection over all~$J \in \{0, 1\}^n$, it follows that~$\hat{\mu}(S) = 1$. Denote by $S_{x^0}$ the section of~$S$ at~$x^0 \in \Omega^0$. In view of $\hat{\mu}(S) = 1$, 
Fubini's theorem implies
	$$
		\mu \left\{ x^0 \in \Omega^0: \mu \left(S_{x^0} \right) = 1  \right\} = 1.
	$$
 The desired result follows once we observe that $(x_1^0, x_1, x_2^0, x_2, \dots,  x_n^0, x_n) \in S$ if and only if $\prod_{i=1}^n \left\{ x_{i}^0, x_{i} \right\} \subseteq A$.
\end{proof}

We can now deduce the main result.

\begin{proof}[Proof of Theorem \ref{thm: multiperiod}]
	Suppose that $V^{(n)}$ is of the form \eqref{eq: multiperiod v} with portfolio position $(h^{(n)}, g^{(n)})$ and that $V^{(n)} \to V$ on $A \subseteq (\R^d)^T$ with $P(A) = 1$. In view of  $P \sim \otimes_{t=1}^T \otimes_{j=1}^d \mu_{t,j}$, Lemma~\ref{lemma: multiperiod anchor} implies that there exists $x^0 \in A$ such that
$$
	B:= \left\{ x \in A: \, \prod_{t=1}^T \prod_{j=1}^d \left\{ x_{t, j}^0, x_{t, j} \right\} \subseteq A \right\}
$$
satisfies $P(B)=1$.  (In fact, $P$-almost any $x^0 \in A$ will do.) We use this point~$x^{0}$ for the normalization~\eqref{eq: initial cond}.

We claim that the limit $(h(x), g(x)) := \lim_n (h^{(n)}(x), g^{(n)}(x))$ exists for all~$x\in B$. To prove this, let $x^1 \in B$ satisfy $x_{t,j}^1 \neq x_{t,j}^0$ for all~$t,j$ and consider the cuboid~$D$ determined by $x^{0}$ and~$x^{1}$; cf.~\eqref{eq: D}. Then we see from Proposition~\ref{prop: multiperiod D} that $(h^{(n)}, g^{(n)})$ converges to some $(h, g)$ on $D$, and that $(h,g)$ is uniquely determined on $D$. In particular, the limit exists at $x:=x^{0}$ and at $x:=x^{1}$. Given an arbitrary $x\in B$, as $\mu_{t,j}$ is not a Dirac measure, we can find $x^1 \in B$ such that $x_{t,j}^1 = x_{t,j}$ if $x_{t,j} \neq x_{t,j}^0$ and $x_{t,j}^1 \neq x_{t,j}^0$ for all~$t,j$. Applying the above to~$x^{0}$ and $x^{1}$, we see that the limit exists at~$x$. The same argument also establishes the uniqueness and convergence of portfolio positions.
\end{proof}

\section{Proof of Proposition \ref{prop: multiperiod D}} \label{se:fullRank}

Suppose that $V: D \to \R$ is of the form \eqref{eq: multiperiod v} with portfolio position $(h, g)$, where $g$ satisfies (\ref{eq: initial cond}) at $x^0$. 
This induces a linear system with the values of~$h$ and $g$ at the points in~$D$ as variables and the price increments $\Delta_{t+1, j}=X_{t+1, j} - X_{t, j}$ as coefficients. To start with the simplest example, consider $T=2$ and $d=1$, so that
$
	V = h_{1,1} \,(X_{2,1} - X_{1,1}) + g_1
$
and
$
	D = \left\{ (x_{1,1}^0, x_{2,1}^0), (x_{1,1}^0, x_{2,1}^1), (x_{1,1}^1, x_{2,1}^0), (x_{1,1}^1, x_{2,1}^1) \right\}.
$
This corresponds to 4 equations and can be cast as a $4 \times 4$ linear system
\begin{equation} \label{eq: T=2 d=1}
	\begin{bmatrix}[1.33]
		x_{2,1}^0 - x_{1,1}^0 & & 1 & \\
		x_{2,1}^1 - x_{1,1}^0 & & & 1 \\
		& x_{2,1}^0 - x_{1,1}^1 & 1 & \\
		& x_{2,1}^1 - x_{1,1}^1 & & 1 \end{bmatrix} \begin{bmatrix}[1.33]
		\hat h_{1,1} (x_{1,1}^0) \\ \hat h_{1,1} (x_{1,1}^1) \\ \hat g_1(x_{2,1}^0) \\ \hat g_1(x_{2,1}^1) \end{bmatrix} = \begin{bmatrix}[1.33]
		V(x_{1,1}^0, x_{2,1}^0) \\ V(x_{1,1}^0, x_{2,1}^1) \\ V(x_{1,1}^1, x_{2,1}^0) \\ V(x_{1,1}^1, x_{2,1}^1) \end{bmatrix}.
\end{equation}
In this example, the condition~\eqref{eq: initial cond} is vacuous as $d=1$.

In the general case, \eqref{eq: multiperiod v} on $D$ can be viewed as a $N_r \times N_c$ linear system that we describe next.
For $1 \leq t \leq T$, consider the binary vector $\eps_t = (\eps_{t,1}, \eps_{t,2}, \dots, \eps_{t,d}) \in \{0,1\}^d$. We view 
$$
	\vec \eps_t := (\eps_{s})_{s=1}^t := (\eps_{1,1}, \dots, \eps_{1,d}, \eps_{2,1}, \dots, \eps_{2,d}, \dots, \eps_{t,1}, \dots, \eps_{t,d}) \in \{0,1\}^{\,d\,t}
$$ 
as a multi-index of length $d\,t$ and denote the collection of all such $\vec \eps_t$ by~$\cE_t$. 
There is a one-to-one correspondence between~$\cE_T$ and~$D$ via $\vec \eps_T \mapsto x^{\vec \eps_T} := (x_{t,j}^{\eps_{t,j}})_{t,j}$.
More generally, for~$1 \leq t \leq T$, the set $\cE_t$ corresponds to the set $\prod_{s=1}^t \prod_{j=1}^{d} \{ x_{s,j}^0, x_{s,j}^1\}$ via $\vec \eps_t \mapsto x^{\vec \eps_t} := (x_{s,j}^{\eps_{s,j}})_{s,j}$, where $1 \leq s \leq t$ and $1 \leq j \leq d$.

Every point $x^{\vec \eps_T} \in D$ gives rise to an equation as we evaluate $V$ at $x^{\vec \eps_T}$. Hence the number of rows in our system is $N_r = 2^{\, d \, T}$, the cardinality of $\cE_T$. On the other hand, for each $1 \leq t \leq T-1$ and each $1 \leq j \leq d$, $\hat h_{t,j}$ gives rise to $2^{\,d \, t}$ variables, namely $\hat h_{t,j} (x^{\vec \eps_t})$ for $\vec \eps_t \in \cE_t$. In addition, recalling~\eqref{eq: initial cond}, $\hat g$ gives rise to $d+1$ variables, namely $\hat g_1 (x_{T, 1}^0)$ and $\hat g_j (x_{T,j}^1)$ for $1 \leq j \leq d$. As a result, the total number of variables (hence columns in the system) is
$$
	N_c = d \sum_{t=1}^{T-1} 2^{\,d\,t} + (d+1) = d \, \frac{2^{\,d\,T} -  1}{2^d - 1}+ 1.
$$

As in~\eqref{eq: T=2 d=1}, the coefficients of the matrix are given by stock price increments $(x_{t+1, j}^{\eps_{t+1,j}} - x_{t, j}^{\eps_{t,j}})_{t,j}$ and binary entries related to the options.
To unambiguously determine the matrix, we need to specify an order for the rows and columns; in fact, we tailor our order to facilitate the exposition below.
For $1 \leq t \leq T$, consider the natural lexicographic order on $\cE_t$ and equip $(x^{\vec \eps_t})_{\vec \eps_t \in \cE_t}$ with the induced order;
that is, $x^{\vec \eps_t} \leq x^{\vec \eta_t}$ if and only if $\vec \eps_t \leq \vec \eta_t$.
In particular, the order on $\cE_T$ induces an order on $D = (x^{\vec \eps_T})_{\vec \eps_T \in \cE_T}$.
The row ordering is set by evaluating \eqref{eq: multiperiod v} on $D$ in that order. 
The column ordering is given by the following (top-to-bottom) hierarchy:
\begin{enumerate}
	\item The variables from $\hat h$ appear before those from $\hat g$.
	\item The $d \sum_{t=1}^{T-1} 2^{\,d\,t}$ variables $\hat h_{t,j}(x^{\vec \eps_t})$ are sorted
	\begin{enumerate}
		\item in \emph{descending} order of $t \in \{T-1, T-2, \dots, 1\}$,
		\item then in ascending order of $\vec \eps_t \in \cE_t$, and
		\item lastly in ascending order of $j \in \{1, 2, \dots, d\}$. 
	\end{enumerate}
	\item The $d+1$ variables from $\hat g$ are ordered as
	$$
		\hat g_{1}(x_{T,1}^0),\, \hat g_{1}(x_{T,1}^1),\, \hat g_{2}(x_{T,2}^1),\, \hat g_{3}(x_{T,3}^1), \,\dots\,, \,\hat g_{d}(x_{T,d}^1).
	$$
\end{enumerate}
The reader can verify that the matrix in \eqref{eq: T=2 d=1} follows the desired ordering; a more advanced example can be found in \eqref{eq: column order} below.
For the general case  $T \geq 2$ and $d \geq 1$, the above convention uniquely determines a matrix~$L_{T}$ which will be shown to have the following property.

\begin{lemma} \label{lemma: rank}
	The matrix $L_T$ has full column rank.
\end{lemma}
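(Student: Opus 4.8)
The plan is to show injectivity of the linear map $L_T$, i.e.\ that the homogeneous system $L_T z = 0$ forces $z = 0$; since $L_T$ has at least as many rows ($N_r = 2^{dT}$) as columns ($N_c$), full column rank is equivalent to this injectivity. Concretely, $z=0$ means: if a semistatic strategy $V$ vanishes identically on the cuboid $D$ and its option position satisfies the normalization~\eqref{eq: initial cond}, then all the values $\hat h_{t,j}(x^{\vec\eps_t})$ and $\hat g_j(x_{T,j}^{\eps})$ must vanish. The natural strategy is an induction that peels off one time step at a time, from $t = T-1$ down to $t = 1$, exploiting the column ordering (descending in $t$) that the authors have carefully arranged precisely to make this elimination transparent.

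First I would handle the base/innermost structure coming from the last increment. Fix the first $T-1$ coordinates of a point in $D$ (a choice of $\vec\eps_{T-1}\in\cE_{T-1}$) and vary only the final stock vector $X_T$ over the $2^d$ corners $\{x_{T,j}^0,x_{T,j}^1\}_j$. On this $2^d$-point sub-cuboid, $V$ reduces to $\sum_{j} \hat h_{T-1,j}(x^{\vec\eps_{T-1}})(x_{T,j}^{\eps_{T,j}}-x_{T-1,j}^{\eps_{T-1,j}}) + \sum_j \hat g_j(x_{T,j}^{\eps_{T,j}})$, which is exactly the two-period/multi-stock additive-plus-increment structure. Using that $x_{T,j}^0\neq x_{T,j}^1$ for every $j$ (a defining property of $D$), differences of the $V$-values across flipping a single coordinate $\eps_{T,j}$ isolate $\hat h_{T-1,j}(x^{\vec\eps_{T-1}})$ together with the option increment $\hat g_j(x_{T,j}^1)-\hat g_j(x_{T,j}^0)$; comparing two distinct base points $\vec\eps_{T-1}$ (which exist because $d\geq1$ and we are free to flip earlier coordinates) then separates the stock position from the option increment, and the normalization~\eqref{eq: initial cond} together with the single remaining degree of freedom at $j=1$ pins down all the $\hat g_j$ values. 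This shows the option variables and the entire layer $\hat h_{T-1,\cdot}$ are determined (equal to zero in the homogeneous problem).

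Once the last layer is eliminated, I would substitute the now-known (vanishing) values of $\hat h_{T-1,\cdot}$ and $\hat g$ back into the system. The residual equations involve only $\hat h_{t,j}$ for $t\leq T-2$, and crucially they no longer depend on $X_T$; but $V$ must still vanish for \emph{every} choice of $\eps_T$, so we may sum or average over the final coordinate, collapsing the system to the analogous system one period shorter (with $X_{T-1}$ now playing the role of the terminal date and a telescoping of the final increment). This is where the descending-$t$ ordering pays off: the matrix is block lower-triangular with respect to the time layers, so eliminating layer $T-1$ leaves a system of exactly the same combinatorial type on $\cE_{T-1}$. Iterating the elimination down to $t=1$ forces every $\hat h_{t,j}(x^{\vec\eps_t})=0$, completing the injectivity argument.

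The main obstacle I anticipate is the bookkeeping at the inductive step: verifying that after eliminating layer $t+1$ the reduced system is genuinely of the same form (same incidence pattern between increments and corner points, same nondegeneracy $x_{t,j}^0\neq x_{t,j}^1$), rather than merely morally analogous. In particular one must check that the increment $X_{t+1,j}-X_{t,j}$ coefficients do not accidentally degenerate after the terminal coordinates are averaged out, and that the telescoping of successive increments preserves the rectangular ``cycle'' structure that drives the base case. Formalizing this---perhaps by exhibiting $L_T$ explicitly as a block lower-triangular matrix whose diagonal blocks are copies of the two-period matrix from~\eqref{eq: T=2 d=1} (whose determinant is the nonzero product $\prod_j (x_{T,j}^1-x_{T,j}^0)$ type quantity)---would be the technical heart of the proof, and is presumably why the authors devote an entire section to it.
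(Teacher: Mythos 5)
Your overall strategy---prove injectivity of the homogeneous system $L_T z=0$ and peel off one time layer at a time, inducting on $T$---is the same in spirit as the paper's proof, which implements exactly this elimination via row/column operations and an induction on $T$. However, there is a genuine gap at your pivotal first step, and for $T\geq 3$ that step is false as stated, not merely unproven. Flipping a terminal coordinate $\eps_{T,j}$ gives, in the homogeneous problem, $\hat h_{T-1,j}(x^{\vec\eps_{T-1}})\,\delta_{T,j}=\hat g_j(x_{T,j}^1)-\hat g_j(x_{T,j}^0)$ for every $\vec\eps_{T-1}$ (cf.\ \eqref{eq: diff}); comparing two base points therefore yields only that $\hat h_{T-1,j}\equiv c_j$ is \emph{constant} on the last layer, with the option increment determined by $c_j$. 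It does not yield $c_j=0$: the assignment $\hat h_{t,j}\equiv c_j$ for all $t$ together with $\hat g_j(y)=-c_j y+\gamma_j$ satisfies every equation obtainable from terminal flips and their base-point comparisons, since $\hat g_j(x_{T,j}^1)-\hat g_j(x_{T,j}^0)=-c_j(x_{T,j}^1-x_{T,j}^0)=c_j\delta_{T,j}$. When $T=2$ you can indeed finish as you indicate, because flipping $\eps_{1,k}$ involves no earlier stock layer and directly gives $c_k\delta_{1,k}=0$; but when $T\geq3$, flipping $\eps_{T-1,k}$ drags in the previous layer and gives $\hat h_{T-2,k}(x^{\vec\eps_{T-2}})=c_k$, not $c_k=0$. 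Consequently your next step (``substitute the now-known vanishing values \dots the residual equations involve only $\hat h_{t,j}$, $t\le T-2$'') rests on a claim that is not available at that stage, and the induction as described does not close.

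The repair is precisely what the paper's bookkeeping accomplishes: carry the unknown constants $c_j$ along instead of setting them to zero. Substituting $\hat h_{T-1,j}\equiv c_j$ and absorbing $c_j X_{T,j}$ into the option shows that the residual equations form a $(T-1)$-period system of the same semistatic form whose terminal option is $\bar g_j(y)=-c_j y+\gamma_j$; the inductive hypothesis (together with the observation that the kernel of the \emph{unnormalized} shorter system consists exactly of constant shifts of the options summing to zero) forces $\bar g_j(x_{T-1,j}^0)=\bar g_j(x_{T-1,j}^1)$, i.e.\ $c_j\delta_{T-1,j}=0$, and only at this later stage does the nondegeneracy $\delta_{T-1,j}\neq0$ kill $c_j$. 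In the paper this is visible in the fact that the reduced matrix $M_T$ retains the columns for $\hat h_{T-1,j}(x^{\mathds 1_{T-1}})$ with entries $\pm\delta_{T-1,j}$ (cf.\ \eqref{eq: first d columns} and \eqref{eq: C''}): the last layer is not eliminated to zero at the first stage, it is pinned down one stage later. For the same reason, your hoped-for block lower-triangular form with diagonal blocks equal to copies of the two-period matrix \eqref{eq: T=2 d=1} is not achievable: the increments $X_{t+1,j}-X_{t,j}$ couple consecutive layers, and the diagonal blocks that actually appear after reduction are the difference matrices built from the $\delta_{t,j}$, not two-period blocks.
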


\begin{remark}
	In the case $d=1$ of a single stock, $L_T$ is a square matrix and
\begin{align*}
	\det L_T  =  (-1)^{T-1} \, \left[ \, \prod_{t=1}^{T-1} \left( x_{t,1}^0 - x_{t,1}^1 \right)^{2^{t-1}} \right] \, \left( x_{T,1}^0 - x_{T,1}^1 \right)^{2^{T-1} - 1} \neq 0.
\end{align*}
The proof uses arguments similar to the proof of Lemma~\ref{lemma: rank} below. %
\end{remark}

Once the lemma is established, Proposition~\ref{prop: multiperiod D} is a direct consequence:

\begin{proof}[Proof of Proposition \ref{prop: multiperiod D}]
 By Lemma~\ref{lemma: rank}, the linear map associated with $L_{T}$ is injective. Its inverse is continuous as a linear finite-dimensional map, showing that the portfolio position is a continuous function of the strategy.
\end{proof}

\subsection{Proof of Lemma~\ref{lemma: rank}}

Next, we introduce some additional notation for the proof of Lemma~\ref{lemma: rank}.
Given $1 \leq t \leq T-1$ and $\vec \eps_{t} \in \cE_{t}$, we define $D^{\vec \eps_{t}} \subseteq D$ by 
\begin{equation} \label{eq: D subset}
	D^{\vec \eps_{t}} := \left\{ x^{\vec \eps_{t}} \right\} \times \left( \prod_{s=t+1}^{T} \prod_{j=1}^d \left\{ x_{s, j}^0, x_{s, j}^1 \right\} \right).
\end{equation}
That is, $D^{\vec \eps_{t}}$ consists of the $2^{\,d\,(T-t)}$ points in $D$ that share the vector $x^{\vec \eps_{t}}$ in their first $d\,t$~coordinates. Note that for each $1 \leq t \leq T-1$, $\{ D^{\vec \eps_{t}} : \vec\eps_{t} \in \cE_{t} \}$ forms a partition of $D$.

For each $1 \leq t \leq T-1$, $1\leq j \leq d$ and $(k, l) \in \{0,1\}^2$, we introduce a shorthand for the corresponding stock price increment
\begin{equation} \label{eq: Delta def}
	\Delta_{t+1, j}^{k, l} :=  x_{t+1,j}^k - x_{t,j}^{l},
\end{equation}
as well as the stock price difference
\begin{equation} \label{eq: delta def}
	\delta_{t,j} := x_{t,j}^0 - x_{t,j}^1
\end{equation}
which is nonzero by assumption. We record two identities for later use,%
\begin{equation} \label{eq: Delta diff}
	\Delta_{t+1,j}^{0,l} - \Delta_{t+1,j}^{1,l} = \delta_{t+1, j} \quad \text{and} \quad \Delta_{t+1,j}^{l,0} - \Delta_{t+1,j}^{l,1} = -\delta_{t, j},
\end{equation}
where the right-hand sides do not depend on $l \in \{0,1\}$.

As the proof of Lemma \ref{lemma: rank} is somewhat involved, we first state an example to illustrate some of the arguments.

\begin{figure}[tbh]
		\centering
		\subfigure[$L_2$ in Example \ref{ex: T=2 d=2}]{
			\label{fig: mat}
			\includegraphics[width=0.45\textwidth, trim={1.5cm, 17cm, 12cm, 2cm}, clip]{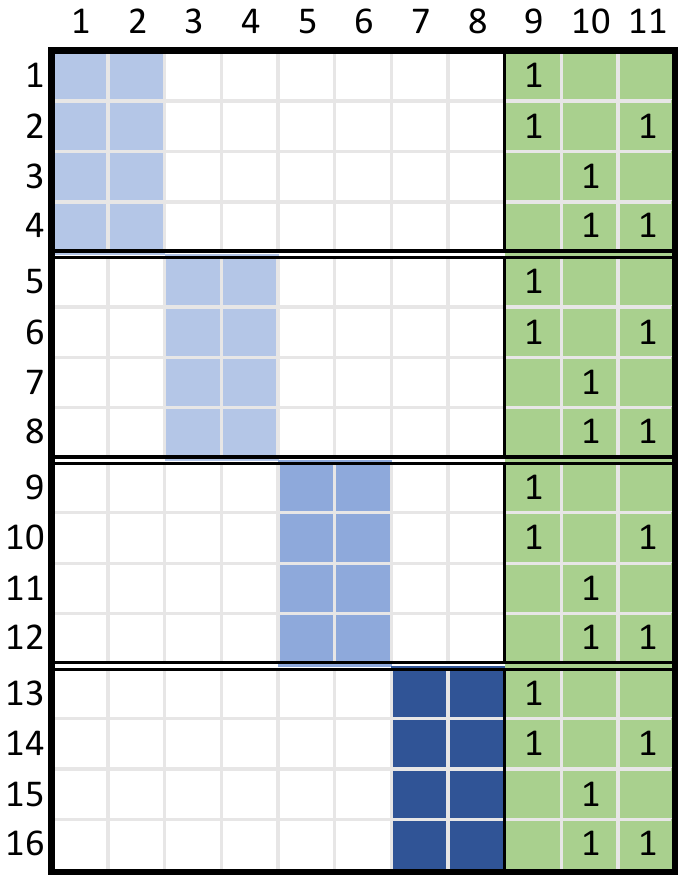}}
		\qquad
		\subfigure[$L_2'$ in Example \ref{ex: T=2 d=2}]{
			\label{fig: mat2}
			\includegraphics[width=0.45\textwidth, trim={1.5cm, 17cm, 12cm, 2cm}, clip]{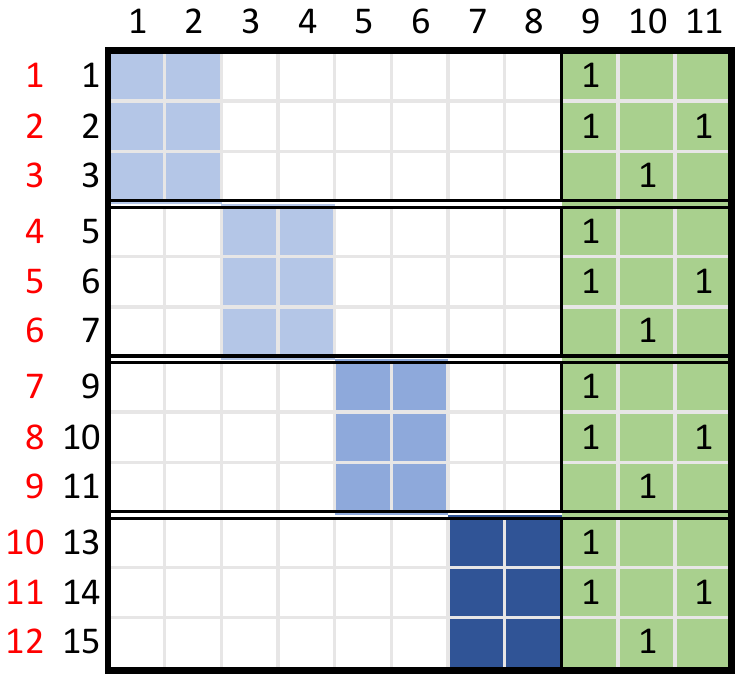}}
		\caption{
		The blue and green blocks are coefficients associated with $\hat h$ and $\hat g$, respectively. The white blocks are identically zero.
Panel (a): $L_2$ is divided by the double border into $4$ submatrices $(L_2^{\vec \eps_1})_{\vec \eps_1 \in \cE_1}$, each of which has $4$~rows and $11$~columns (where only $5$ columns are nonzero). 
Panel (b): $L_2'$ is formed by vertically stacking the $3\times 11$ submatrices $(L_2^{\vec \eps_1})'$ for all $\vec \eps_1 \in \cE_1$. The old and new row numbers are printed in black and red, respectively.}
	\end{figure}

\begin{example}[$T = 2$ and $d = 2$] \label{ex: T=2 d=2}
	Note that \eqref{eq: multiperiod v} reads 
	$$V = h_{1,1} \, (X_{2,1} - X_{1,1}) + h_{1,2} \, (X_{2,2} - X_{1,2}) + g_1 + g_2.$$
	The matrix $L_2$ has $16$~rows and $11$~columns; see Figure~\ref{fig: mat}. Since $\cE_1 = \{(0,0), (0,1), (1,0), (1,1)\}$, the column ordering corresponds to the vector of $11$~variables
	\begin{equation} \label{eq: column order}
		\left[ \hat h_{1,1}(x^{\vec \eps_1}), \hat h_{1,2}(x^{\vec \eps_1})  \text{ for } \vec \eps_1 \in \cE_1 \right], \
			    \hat g_1( x_{2,1}^0),\ \hat g_1( x_{2,1}^1), \ \hat g_2( x_{2,2}^1).
	\end{equation}
To show that $L_2$ has full column rank, we proceed in four steps.\\[.6em]
	\textit{Step~1.} Our goal is to divide $L_2$ into $4$ submatrices $(L_2^{\vec \eps_1})_{\vec \eps_1 \in \cE_1}$, each of which has $4$~rows and $11$~columns. We will identify a linearly dependent row in each $L_2^{\vec \eps_1}$. 

	Consider \eqref{eq: D subset} with $T=2$. For each $\vec \eps_1 \in \cE_1$, evaluating $V$ on $D^{\vec \eps_1}$ gives rise to a $4 \times 11$ submatrix $L_2^{\vec \eps_1}$ of $L_2$. It has $4$~rows as $|D^{\vec \eps_1}| = 4$; there are in total $4$ submatrices as $|\cE_1| = 4$. The coefficients across these submatrices follow a similar pattern.
	For concreteness, consider $\vec \eps_1 = (0, 0) \in \cE_1$. Recalling the definition of price increments from~\eqref{eq: Delta def}, evaluating $V$ on $D^{\vec \eps_1}$ gives
	\begin{equation} \label{eq: submat}
	\begin{bmatrix}[1.33]
		\Delta_{2,1}^{0, 0} & \Delta_{2,2}^{0, 0} & \cdots & 1 & & \\
		\Delta_{2,1}^{0, 0} & \Delta_{2,2}^{1, 0} & \cdots & 1 & & 1\\
		\Delta_{2,1}^{1, 0} & \Delta_{2,2}^{0, 0} & \cdots & & 1 & \\
		\Delta_{2,1}^{1, 0} & \Delta_{2,2}^{1, 0} & \cdots & & 1 & 1\\
	\end{bmatrix} \begin{bmatrix}[1.25]
	h_{1,1}(x^{(0,0)}) \\ h_{1,2}(x^{(0,0)}) \\ \vdots \\ g_1(x_{2,1}^0) \\ g_1(x_{2,1}^1) \\ g_2(x_{2,2}^1)   \end{bmatrix} =
	\begin{bmatrix}[1.33]
		v(x^{(0,0)}, x_{2,1}^0, x_{2,2}^0) \\ v(x^{(0,0)}, x_{2,1}^0, x_{2,2}^1) \\ v(x^{(0,0)}, x_{2,1}^1, x_{2,2}^0) \\ v(x^{(0,0)}, x_{2,1}^1, x_{2,2}^1) \end{bmatrix},
\end{equation}
where the omitted entries (including $6$ entire columns) are zero. 
Denote the matrix in \eqref{eq: submat} by $L_2^{\vec \eps_1}$ and let $R_i$ stand for its Row $i$.
Note that $R_1 - R_2 = R_3 - R_4$.
Since $R_4$ is a linear combination of the other rows, dropping it from $L_2^{\vec \eps_1}$ does not alter the matrix rank.
This argument applies to all $\vec \eps_1 \in \cE_1$. Denote the remaining matrix by $(L_2^{\vec \eps_1})'$. 
We form the matrix $L_2'$ by vertically stacking $(L_2^{\vec \eps_1})'$ for all $\vec \eps_1 \in \cE_1$; see Figure~\ref{fig: mat2}. It follows that $L_2'$ has the same rank as $L_2$. \\[.6em]
	\textit{Step 2.} Recall that elementary row and column operations preserve the matrix rank. In this step, we aim to bring $L_2'$ to a block lower triangular matrix
\begin{equation} \label{eq: C}
	\begin{bmatrix}[1.5] L_2'' & \\ B & C \end{bmatrix}, \quad \text{where} \quad C = \begin{bmatrix}
		 1 & & \\
		 1 & & 1\\
		 & 1 &
	\end{bmatrix}
\end{equation}
	by suitable row operations. Indeed, let $\mathds 1 = (1, 1) \in \cE_1$ and $\vec \eps_1 \neq \mathds 1$ in $\cE_1$. After subtracting $( L_2^{\mathds 1} )'$ from $(L_2^{\vec \eps_1})'$, we denote the resultant matrix by $(L_2^{\vec \eps_1})''$. By applying this procedure to all $\vec \eps_1 \neq \mathds 1$, we bring $L_2'$ into the desired form, where 
\begin{equation} \label{eq: L_2''}
	L_2'' = \begin{bmatrix}[1.33]
		\Delta_{2,1}^{0, 0} & \Delta_{2,2}^{0, 0} & & & & & -\Delta_{2,1}^{0, 1} & -\Delta_{2,2}^{0, 1} \\
		\Delta_{2,1}^{0, 0} & \Delta_{2,2}^{1, 0} & & & & & -\Delta_{2,1}^{0, 1} & -\Delta_{2,2}^{1, 1} \\
		\Delta_{2,1}^{1, 0} & \Delta_{2,2}^{0, 0} & & & & & -\Delta_{2,1}^{1, 1} & -\Delta_{2,2}^{0, 1} \\
		\hline \hline
		& & \Delta_{2,1}^{0, 0} & \Delta_{2,2}^{0, 1} & & & -\Delta_{2,1}^{0, 1} & -\Delta_{2,2}^{0, 1} \\
		& & \Delta_{2,1}^{0, 0} & \Delta_{2,2}^{1, 1} & & & -\Delta_{2,1}^{0, 1} & -\Delta_{2,2}^{1, 1} \\
		& & \Delta_{2,1}^{1, 0} & \Delta_{2,2}^{0, 1} & & & -\Delta_{2,1}^{1, 1} & -\Delta_{2,2}^{0, 1} \\
		\hline \hline
		& & & & \Delta_{2,1}^{0, 1} & \Delta_{2,2}^{0, 0} & -\Delta_{2,1}^{0, 1} & -\Delta_{2,2}^{0, 1} \\
		& & & & \Delta_{2,1}^{0, 1} & \Delta_{2,2}^{1, 0} & -\Delta_{2,1}^{0, 1} & -\Delta_{2,2}^{1, 1} \\
		& & & & \Delta_{2,1}^{1, 1} & \Delta_{2,2}^{0, 0} & -\Delta_{2,1}^{1, 1} & -\Delta_{2,2}^{0, 1} \\
	\end{bmatrix}.
\end{equation}
	Clearly $C$ has full rank, and it follows that $L_2$ has full column rank if and only if $L_2''$ does.\\[.6em]
	\textit{Step 3.} Recall the definition of stock price differences~\eqref{eq: delta def}. Let $\vec \eps_1 \neq \mathds{1}$. In $(L_2^{\vec \eps_1})''$, subtracting its Row~$1$ from all other rows leaves each of them with precisely two nonzero entries of the same magnitude, $\delta_{2,j}$, with the opposite signs; cf.\ \eqref{eq: Delta diff}. Applying this procedure to all $\vec \eps_1 \neq \mathds 1$ yields
\begin{equation} \label{eq: ex step3.1}
	\begin{bmatrix}[1.33]
		\Delta_{2,1}^{0, 0} & \Delta_{2,2}^{0, 0} & & & & & -\Delta_{2,1}^{0, 1} & -\Delta_{2,2}^{0, 1} \\
		& -\delta_{2,2} & & & & & & \delta_{2,2} \\
		-\delta_{2,1} & & & & & & \delta_{2,1} &  \\
		\hline \hline
		& & \Delta_{2,1}^{0, 0} & \Delta_{2,2}^{0, 1} & & & -\Delta_{2,1}^{0, 1} & -\Delta_{2,2}^{0, 1} \\
		& & & -\delta_{2,2} & & & & \delta_{2,2} \\
		& & -\delta_{2,1} & & & & \delta_{2,1} & & \\
		\hline \hline
		& & & & \Delta_{2,1}^{0, 1} & \Delta_{2,2}^{0, 0} & -\Delta_{2,1}^{0, 1} & -\Delta_{2,2}^{0, 1} \\
		& & & & & -\delta_{2,2} & &  \delta_{2,2} \\
		& & & & -\delta_{2,1} & & \delta_{2,1} \\
	\end{bmatrix}.
\end{equation}

	Recall that the odd and even numbered columns correspond to the variables~$\hat h_{1,1}(x^{\vec \eps_1})$ and $\hat h_{1,2} (x^{\vec\eps_1})$ for $\vec \eps_1 \in \cE_1$, respectively. In particular, Column~$7$ corresponds to $\hat h_{1,1}(x^{\mathds 1})$ and Column~$8$ corresponds to $\hat h_{1,2}(x^{\mathds 1})$. We add Columns~$1,3,5$ to Column~$7$, and Columns~$2,4,6$ to Column~$8$:
\begin{equation} \label{eq: ex step3.2}
	\begin{bmatrix}[1.33]
		\Delta_{2,1}^{0, 0} & \Delta_{2,2}^{0, 0} & & & & & -\delta_{1,1} & -\delta_{1,2} \\
		& -\delta_{2,2} & & & & & & \\
		-\delta_{2,1} & & & & & & &  \\
		\hline \hline
		& & \Delta_{2,1}^{0, 0} & \Delta_{2,2}^{0, 1} & & & -\delta_{1,1} \\
		& & & -\delta_{2,2} & & & &  \\
		& & -\delta_{2,1} & & & & & & \\
		\hline \hline
		& & & & \Delta_{2,1}^{0, 1} & \Delta_{2,2}^{0, 0} & & -\delta_{1,2} \\
		& & & & & -\delta_{2,2} & &  \\
		& & & & -\delta_{2,1} & & \\
	\end{bmatrix}.
\end{equation}
All rows except Rows~$1,4,7$ have exactly one nonzero entry, and none of these entries share a common column. 
(We remark that Rows~$1,4,7$ correspond to the multi-indices $(\vec \eps_1, 0, 0)$ for $\vec\eps_1 \neq \mathds 1$ in $\cE_1$.)
Dropping these rows and columns, we denote the remaining $3 \times 2$ submatrix by $M_2$:
\begin{equation} \label{eq: M2}
	M_2=\begin{bmatrix}
		 -\delta_{1,1} & -\delta_{1,2} \\
		 -\delta_{1,1} & \\
		& -\delta_{1,2}
	\end{bmatrix}.
\end{equation}
It follows that $L_2$ has full column rank if and only if $M_2$ does.\\[.6em]
\textit{Step 4.} By inspection, $M_2$ indeed has full column rank. The example is complete.
\end{example}
\begin{remark} \label{rmk: base case}
	The example can be generalized to $T=2$ and $d \geq 1$ without much effort; cf.\ Steps 1 to 3 in the proof of Lemma \ref{lemma: rank} below. In the presence of $d$ stocks, (\ref{eq: C}) holds with a $(d+1)\times (d+1)$ matrix $C$ of binary coefficients
\begin{equation} \label{eq: C d}
	C = \begin{bmatrix}
		1 &  &  &  &  \\
		1 &  &  &  & 1 \\
    \vdots&  &  &\reflectbox{$\ddots$} \\
		1 &  &1& \\
		   &1&  &
	\end{bmatrix}.
\end{equation}
Evidently, $C$ has full rank.
In this case, $M_2$ has $(2^{d}-1)$ rows and $d$ columns; cf.\ \eqref{eq: M2}. For $1 \leq j \leq d$, the coefficients in Column $j$ alternate among
	$$
		( \ \underbrace{-\delta_{1,j}, \dots, -\delta_{1,j}}_{\times 2^{d-j}}, \underbrace{0, \dots, 0}_{\times 2^{d-j}} \ ).
	$$
It is straightforward to verify that $M_2$ has full column rank; cf.\ \eqref{eq: C''} below. (In fact, the rows in $M_2$ are linear combinations of the rows in $C''$.)
\end{remark}

\begin{proof}[Proof of Lemma \ref{lemma: rank}]
	We proceed in four steps. In Step 1, we show that keeping only a selected number of rows in $L_T$ does not decrease the matrix rank. In Steps 2 and 3, we use row and column operations to reduce the problem to showing that a certain matrix $M_T$ has full column rank. In Step 4, we establish the latter by induction on $T \geq 2$. (Steps 1 to 3 are generalizations/abstractions of Example \ref{ex: T=2 d=2}; the matrix in Figure 2 may serve as an illustration for $L_T$.)
\begin{figure}[h!]
		\centering
		\label{fig: L general}
		\includegraphics[width=0.9\textwidth, trim={1.75cm, 1.9cm, 2cm, 2cm}, clip]{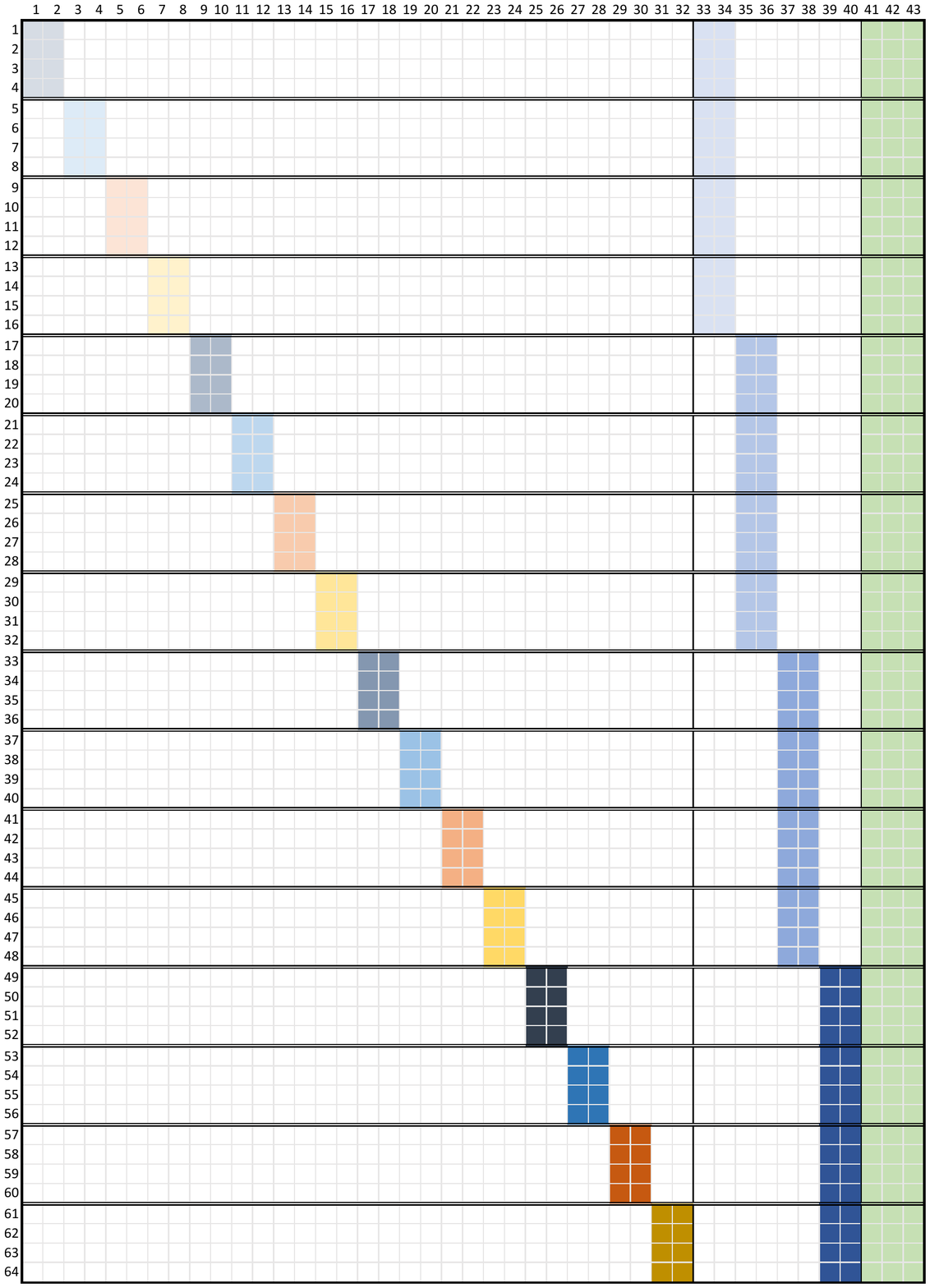}
		\caption{
		Illustration of $L_T$ in the proof of Lemma~\ref{lemma: rank} for $T=3$ and $d=2$. Columns 1--32 correspond to the variables $\hat h_{2,j}(x^{\vec\eps_2})$ for $1 \leq j \leq 2$ and $\vec \eps_2 \in \cE_2$, Columns 33--40 correspond to the variables $\hat h_{1,j}(x^{\vec\eps_1})$ for $1 \leq j \leq 2$ and $\vec \eps_1 \in \cE_1$, and lastly Columns 41--43 correspond to the variables $\hat g_{1}(x_{3,1}^0), \hat g_{1}(x_{3,1}^1), \hat g_{2}(x_{3,2}^1)$.
		}
	\end{figure}

\noindent \textit{Step 1.}  We will divide $L_T$ into $2^{\, d \, (T-1)}$ submatrices $L_T^{\vec \eps_{T-1}}$ for $\vec \eps_{T-1} \in \cE_{T-1}$, each of which has $2^d$~rows and $N_c$~columns. Then, we will identify and remove linearly dependent rows in each submatrix. Thus, the remaining matrix, denoted by $L_T'$, will have the same rank as $L_T$.

	Recall the definition \eqref{eq: D subset}. For each $\vec \eps_{T-1} \in \cE_{T-1}$, evaluating $V$ on $D^{\vec \eps_{T-1}}$ gives rise to a $2^d \times N_c$ submatrix $L_T^{\vec \eps_{T-1}}$ of $L_T$. It has $2^d$~rows as $|D^{\vec \eps_{T-1}}| = 2^d$; there are in total $2^{\, d \, (T-1)}$ submatrices as $|\cE_{T-1}| = 2^{\, d \, (T-1)}$. 

Fix $\vec \eps_{T-1} \in \cE_{T-1}$. Consider two points in $D^{\vec \eps_{T-1}}$ that share all but one entry. Formally, they can be denoted as $x^{\vec\eta_T}$ and $x^{\vec\rho_T}$ for $\vec \eta_T, \vec \rho_T \in \cE_T$ such that for some fixed $1 \leq k \leq d$, we have $\eta_{t,j} = \rho_{t,j} = \eps_{t,j}$ for all $1 \leq t \leq T-1$ and $1 \leq j \leq d$, $\eta_{T,j} = \rho_{T,j}$ for all $j \neq k$, and lastly $\eta_{T,k} = 0$ and $\rho_{T,k} = 1$.
Then, it follows from (\ref{eq: multiperiod v}) that
\begin{equation} \label{eq: diff}
	V(x^{\vec \eta_T}) - V(x^{\vec \rho_T}) = \hat h_{T-1, k}(x^{\vec \eps_{T-1}}) \, (x_{T, k}^0 - x_{T, k}^1) + \hat g_k(x_{T, k}^0) - \hat g_k(x_{T, k}^1),
\end{equation}
where the right-hand side does not depend on any $x_{T,j}^{\eta_{T,j}} = x_{T,j}^{\rho_{T,j}}$ for $j \neq k$ (i.e., the shared stock prices at date $T$).

Let $\bm{0}$ and $\bm{e}_j,\, 1 \leq j \leq d$, be the zero vector and the unit vectors in $\{0,1\}^d$. Note that $\vec \eps_{T-1} \in \cE_{T-1}$ is still fixed, and that $(\vec \eps_{T-1}, \bm 0)$ and $(\vec \eps_{t-1}, \bm e_j), \, 1 \leq j \leq d$, are $d+1$ elements in $\cE_{T}$. Let $(L_T^{\vec \eps_{T-1}})'$ be the submatrix of $L_T^{\vec \eps_{T-1}}$ whose rows are generated by these elements. By repeated applications of~\eqref{eq: diff}, it is not hard to see that all rows in $L_T^{\vec \eps_{T-1}}$ are linear combinations of the rows in $(L_T^{\vec \eps_{T-1}})'$. This argument applies to all $\vec \eps_{T-1} \in \cE_{T-1}$. We form the matrix $L_T'$ by vertically stacking $(L_T^{\vec \eps_{T-1}})'$ for all $\vec \eps_{T-1} \in \cE_{T-1}$. It follows that $L_T'$ has the same rank as $L_T$. 

As a remark for later use, we note that for each $1 \leq t \leq T-1$, coefficients of $L_T$ (and $L_T'$) in the columns corresponding to $\hat{h}_{t,j}(x^{\vec\eps_t})$ for $1\leq j\leq d$ are nonzero if and only if they belong to the rows that arise from evaluating $V$ on $D^{\vec\eps_t}$. In other words, for each $1 \leq t \leq T-1$, the $d\,2^{\,d\,t}$ columns of $L_T$ that correspond to $\hat{h}_{t,j}(x^{\vec\eps_t})$ for $1\leq j\leq d$ and $\vec\eps_t \in \cE_t$ form a block diagonal matrix, where each block has $2^{\,d\,(T-t)}$ rows and $d$ columns.\\[.6em]
\textit{Step 2.}  After some block-by-block row operations, we will bring $L_2'$ to a block lower triangular matrix
\begin{equation*} \label{eq: block mat} 
	\begin{bmatrix} L_T'' & \\ B & C \end{bmatrix}, \quad \text{with } C \text{ as in } \eqref{eq: C d}.
\end{equation*}
For $1 \leq t \leq T-1$, we denote $(1,\dots, 1) \in \cE_{t}$ by $\mathds{1}_t$. Fix $\vec \eps_{T-1} \neq \mathds 1_{T-1}$ in $\cE_{T-1}$. After subtracting $( L_T^{\mathds 1_{T-1}} )'$ from $(L_T^{\vec \eps_{T-1}})'$, we denote the resultant matrix by $(L_T^{\vec \eps_{T-1}})''$. Applying this procedure to all $ \vec \eps_{T-1} \neq \mathds 1_{T-1}$ brings $L_T'$ into the form above, where the submatrix $[\, B \;\; C \, ]$ is precisely $( L_T^{\mathds 1_{T-1}} )'$. Our choice of $d+1$ rows in Step 1 guarantees that $C$ is the matrix~\eqref{eq: C d} and has full rank. Thus, it follows that $L_T$ has full column rank if and only if $L_T''$ does. 

To understand the structure of $L_T''$, we need to recall the remark in Step~1 and the stock price difference~\eqref{eq: delta def}. Note that the columns of $L_T''$ that correspond to $\hat h_{t,j}(x^{\vec\eps_t})$ for $1 \leq t \leq T-1, \ 1 \leq j \leq d$ and $\vec \eps_t \neq \mathds 1_t$ in $\cE_t$ remain the same as in $L_T'$, since these columns are identically zero in $( L_T^{\mathds 1_{T-1}} )'$. On the other hand, the nonzero coefficients of $(L_T^{\mathds 1_{T-1}})'$ in the columns that correspond to $\hat h_{t,j}(x^{\mathds 1_t})$ are subtracted from the coefficients of $L_T'$ in these columns. We state two particular instances for later use:
\begin{enumerate}
	\item The columns in $L_T''$ that correspond to $\hat h_{T-1,j}(x^{\mathds 1_{T-1}})$ for $1\leq j\leq d$ repeat the negative values of those columns in $(L_{T}^{\mathds 1_{T-1}})'$; cf.\ \eqref{eq: L_2''}.
	\item For any $\eps_{T-1} \in \{0,1\}^{d}$ and $1 \leq j \leq d$, the coefficient of $L_T''$ in the row corresponding to $(\mathds 1_{T-2}, \eps_{T-1}, \bm{0})$ and the column corresponding to $\hat h_{T-2,j}(x^{\mathds 1_{T-2}})$ is
	\begin{equation*} %
	(x_{T-1, j}^{\eps_{T-1,j}} -x_{T-2, j}^1) - (x_{T-1, j}^1 -x_{T-2, j}^1) =
		\begin{cases} 
			\delta_{T-1, j}  & \text{if } \eps_{T-1,j} = 0 \\ 
			\ \ 0 				& \text{if } \eps_{T-1,j} = 1. 
		\end{cases}
	\end{equation*}
	Analogous patterns hold in the columns corresponding to $\hat h_{t,j}(x^{\mathds 1_{t}})$ for $1 \leq t \leq T-3$.\\
\end{enumerate}

\noindent \textit{Step 3.} Note that  $L_T''$ has $(d+1)(2^{\, d \, (T-1)} - 1)$ rows and $d \sum_{t=1}^{T-1} 2^{\, d \, t}$ columns. %
Fix $\vec \eps_{T-1} \neq \mathds 1_{T-1}$. In~$(L_T^{\vec \eps_{T-1}})''$, we subtract its Row~$1$ from all other rows. In view of~(i) in Step~2, a variant of \eqref{eq: diff} implies that for $1\leq j \leq d$, Row~$(d+2-j)$ has two nonzero entries of the same magnitude, $\delta_{T,j}$, with the opposite signs. More precisely, $-\delta_{T,j}$ is in the column corresponding to $\hat{h}_{T-1,j}(x^{\vec \eps_{T-1}})$ and $\delta_{T,j}$ is in the column corresponding to $\hat h_{T-1, j}(x^{\mathds 1_{T-1}})$; cf.\ \eqref{eq: ex step3.1}.

For $1 \leq j \leq d$, we add the column corresponding to $\hat{h}_{T-1,j}(x^{\vec \eps_{T-1}})$  to the column corresponding to $\hat h_{T-1,j}(x^{\mathds 1_{T-1}})$. As a result, for $1 \leq j \leq d$, Row~$(d+2-j)$ now has precisely one nonzero entry, $-\delta_{T,j}$, and this entry is located in the column corresponding to $\hat{h}_{T-1,j}(x^{\vec \eps_{T-1}})$; cf.~\eqref{eq: ex step3.2}. 
Moreover, for $1 \leq j \leq d$, the first coefficient of $(L_T^{\vec \eps_{T-1}})''$ in the column corresponding to $\hat{h}_{T-1,j}(x^{\mathds 1_{T-1}})$ is
\begin{equation} \label{eq: first d columns}
	\begin{cases} -\delta_{T-1, j} & \text{if } \eps_{T-1,j} = 0 \\ \ \ \ 0 & \text{if } \eps_{T-1,j} = 1 \end{cases}
\end{equation}
and all other coefficients are zero. We apply this procedure to all $\vec \eps_{T-1} \neq \mathds 1_{T-1}$. As a result, $d(2^{\, d \, (T-1)} - 1)$ rows have only one nonzero entry  each, and these entries respectively belong to the first $d(2^{\, d \, (T-1)} - 1)$ columns of $L_T''$. Dropping these rows and columns, we denote the remaining matrix by $M_T$. Thus, we have shown that $L_T$ has full column rank if and only if $M_T$ does. \\[.6em] 
\textit{Step 4.} We now show that $M_T$ indeed has full column rank. We argue by induction on $T$; the base case $T=2$ was established in Remark \ref{rmk: base case}. Note that $M_T$ has $(2^{\, d \, (T-1)} - 1)$ rows and $d \sum_{t=0}^{T-2} 2^{\, d \, t}$ columns, where 
\begin{itemize}
	\item the rows correspond to $(\vec\eps_{T-1}, \bm{0})$ for all $\vec\eps_{T-1} \neq \mathds 1_{T-1}$ in $\cE_{T-1}$;
	\item the first $d$ columns correspond to $\hat h_{T-1, j}(x^{\mathds 1_{T-1}})$ for $1\leq j \leq d$; and
	\item the rest of the columns correspond to $\hat h_{t,j}(x^{\vec \eps_t})$ for $t \in \{T-2, \dots, 1\}$, $\vec\eps_t \in \cE_t$ and $j \in \{1,\dots, d\}$.
\end{itemize}
The matrix in Figure~\ref{fig: mat3} may serve as an illustration. As
$$
	\left\{ (\vec\eps_{T-1}, \bm{0}) :  \vec\eps_{T-1} \in \cE_{T-1} \right\} = \big\{ (\vec\eps_{T-2}, \eps_{T-1}, \bm{0}) : \vec\eps_{T-2} \in \cE_{T-2}, \ \eps_{T-1} \in \{0,1\}^d \big\}
$$
we can divide~$M_T$ into submatrices~$(M_T^{\vec\eps_{T-2}})_{\vec\eps_{T-2}\in\cE_{T-2}}$ such that the rows of~$M_T^{\vec\eps_{T-2}}$ correspond to $(\vec\eps_{T-2}, \eps_{T-1}, \bm{0})$ for all $\eps_{T-1} \in \{0,1\}^d$. Each~$M_T^{\vec\eps_{T-2}}$ has $2^d$ rows, with the exception that $M_T^{\mathds 1_{T-2}}$ has $2^d-1$ rows (since $M_T$ does not contain the row corresponding to $\mathds 1_{T-1}$). For this reason, we need to treat the cases $d\geq2$ and $d=1$ separately.\\[.6em]

\begin{figure}[b!]
		\centering
		\subfigure[$M_3$]{
			\label{fig: mat3}
			\includegraphics[width=0.45\textwidth, trim={1.5cm, 17cm, 12cm, 2cm}, clip]{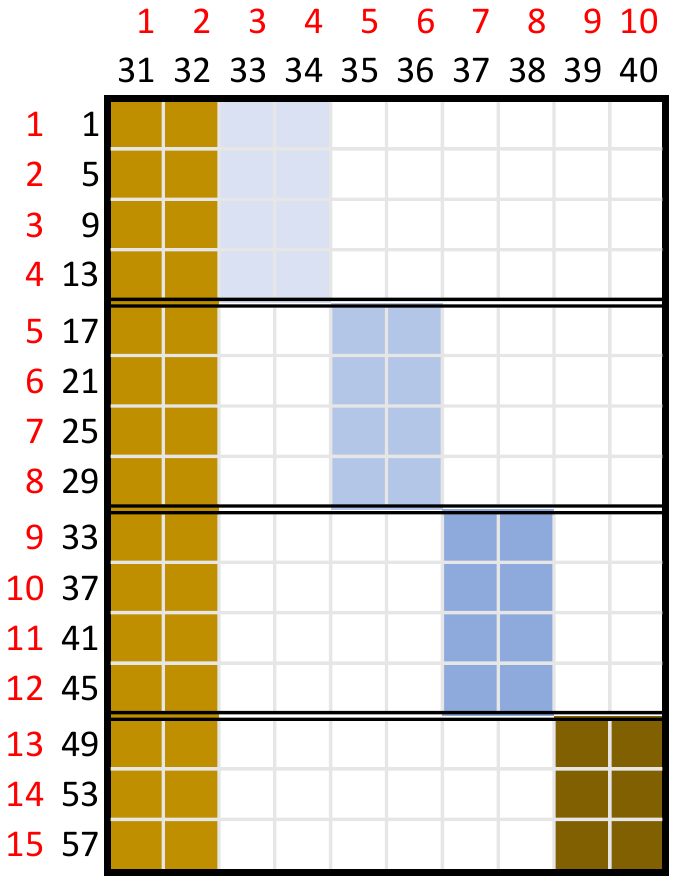}}
		\qquad
		\subfigure[$M_3'$]{
			\label{fig: mat4}
			\includegraphics[width=0.45\textwidth, trim={1.5cm, 17cm, 12cm, 2cm}, clip]{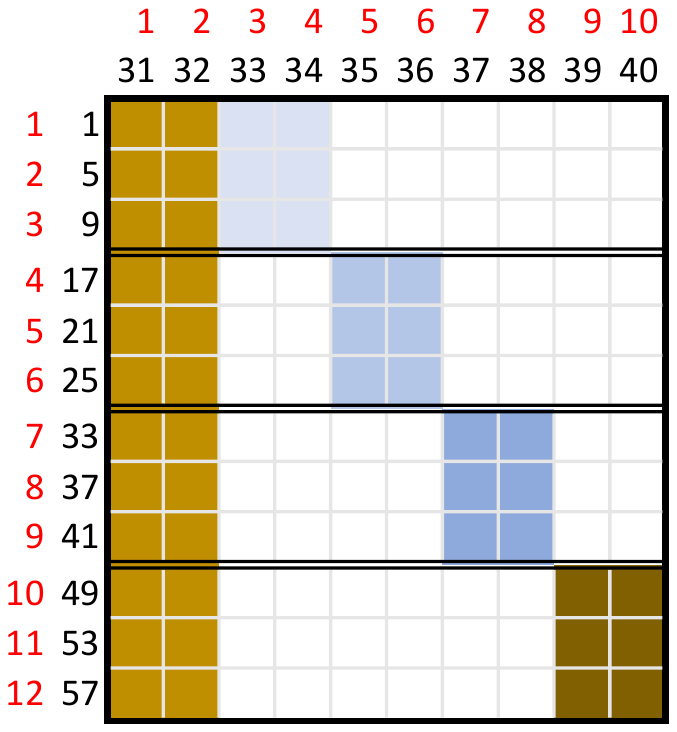}}
		\caption{
		Illustration of Step~4 in the proof of Lemma \ref{lemma: rank} for $T=3$ and $d=2$. The row and column numbers in black are inherited from $L_3$, and the new row and column numbers are printed in red. The first two columns correspond to $\hat h_{2,j}(x^{\mathds 1_2})$ for $j=1,2$; the rest of the columns correspond to $\hat h_{1,j}(x^{\vec\eps_1})$ for $j=1,2$ and $\vec\eps_1 \in \cE_1$. Coefficients in each light gold block are the negatives of those in the dark gold block.}
	\end{figure}
(a) Case $d \geq 2$: Similarly to Step 1, the rows in $M_T$ are linearly dependent, and we can reduce $M_{T}$ to a matrix $M_T'$ of $(d+1)2^{\, d \, (T-2)}$~rows without decreasing its rank.
That is, $M_T'$ consists of $2^{\, d \, (T-2)}$ submatrices $(M_T^{\vec \eps_{T-2}})'$ of $d+1$ rows for $\vec \eps_{T-2} \in \cE_{T-2}$. As a consequence of \eqref{eq: first d columns}, the first $d$ columns of $(M_T^{\vec\eps_{T-2}})'$ make up a $(d+1) \times d$ matrix
\begin{equation} \label{eq: C''}
	C'' := \begin{bmatrix}
		-\delta_{T-1,1} & -\delta_{T-1,2} & & \dots & &-\delta_{T-1,d-1} & -\delta_{T-1,d} \\
		\vdots & \vdots & & \dots & &-\delta_{T-1,d-1} & 0 \\
		\vdots & \vdots & &\dots  & &0 & -\delta_{T-1,d} \\
	      \vdots & \vdots  & & &\reflectbox{$\ddots$} & -\delta_{T-1,d-1} & \vdots \\
		 \vdots  & \vdots  & & \reflectbox{$\ddots$} & &\vdots & \vdots \\
		\vdots & -\delta_{T-1,2}&\reflectbox{$\ddots$} & \dots & & \vdots & \vdots \\
		-\delta_{T-1,1} & 0 & & \dots & & \vdots & \vdots \\
		0 & -\delta_{T-1,2} & & \dots & & -\delta_{T-1,d-1} & -\delta_{T-1,d}
	\end{bmatrix}
\end{equation}
which does not depend on $\vec\eps_{T-2}$ and has full column rank. Thus, the first $d$ columns of $M_T'$ are formed by vertically stacking $C''$ $2^{\, d \, (T-2)}$ times.

Recalling~(ii) from Step~2, we note that the last $d+1$ rows of $M_T'$ correspond to $(\mathds 1_{T-2}, \bm{0}, \bm{0})$ and $(\mathds 1_{T-2}, \bm{e}_j, \bm{0})$ for $1 \leq j \leq d$. As a result, the submatrix of $M_T'$ formed by taking the last $d+1$ rows and the columns corresponding to $h_{T-2, j}(x^{\mathds 1_{T-2}})$ for $1\leq j \leq d$ is precisely $-C''$. Now, adding the columns corresponding to $\hat h_{T-1, j}(x^{\mathds 1_{T-1}})$ to the columns corresponding to $ \hat h_{T-2, j}(x^{\mathds 1_{T-2}})$ for $1 \leq j \leq d$ can bring $M_T'$ to the block triangular form
\begin{equation*} \label{eq: block mat}
	\begin{bmatrix} A'' & M_T'' \\ C'' & \end{bmatrix}.
\end{equation*}
\\[.6em] 
(b) Case  $d = 1$: There is no need to remove any rows from $M_T$. We set $M_T' = M_T$ and note that it has $2^{T-1} - 1$ rows. A subtle difference to (a) is that $M_T'$ does not contain the row corresponding to $(\mathds 1_{T-1}, \bm{e}_1, \bm{0})$. (We remark that $\bm{e}_1 = 1$ and $\bm{0} = 0$ in the case $d=1$.) The coefficients of $M_T'$ in the first column alternate between $-\delta_{T-1,1}$ and $0$. Also, the last row of $M_T'$ has only one nonzero entry, $\delta_{T-1,1}$, in the column corresponding to $h_{T-2, 1}(x^{\mathds 1_{T-2}})$. Adding the column corresponding to $\hat h_{T-1, 1}(x^{\mathds 1_{T-1}})$ to the column corresponding to $ \hat h_{T-2, 1}(x^{\mathds 1_{T-2}})$ brings $M_T'$ to the form above with the $1\times 1$ matrix $C'' = [-\delta_{T-1,1}]$. \\[.6em] 
Therefore, in either case, $M_{T}''$ is a matrix with $(d+1)(2^{\, d \, (T-2)}-1)$ rows and $d \sum_{t=1}^{T-2} 2^{\, d \, t}$ columns. It remains to show that it has full column rank.
Applying the same procedure as in Step 3 to $M_T''$, we see that $d(2^{\, d \, (T-2)} - 1)$ rows have only one nonzero entry each, and none of these entries share a common column. We then drop these rows and columns. One can check that the remaining matrix, of $(2^{\, d \, (T-2)}-1)$ rows and $d \sum_{t=0}^{T-3} 2^{\, d \, t}$ columns, is precisely $M_{T-1}$; we omit further details in the interest of brevity. The inductive hypothesis applies.
\end{proof}

\bibliography{stochfin_LZ}

\newcommand{\dummy}[1]{}
\begin{thebibliography}{10}

\bibitem{AcciaioLarssonSchachermayer.17}
B.~Acciaio, M.~Larsson, and W.~Schachermayer.
\newblock The space of outcomes of semi-static trading strategies need not be
  closed.
\newblock {\em Finance Stoch.}, 21(3):741--751, 2017.

\bibitem{BeiglbockHenryLaborderePenkner.11}
M.~Beiglb{\"o}ck, P.~Henry-Labord{\`e}re, and F.~Penkner.
\newblock Model-independent bounds for option prices: a mass transport
  approach.
\newblock {\em Finance Stoch.}, 17(3):477--501, 2013.

\bibitem{BeiglbockNutzTouzi.15}
M.~Beiglb{\"o}ck, M.~Nutz, and N.~Touzi.
\newblock Complete duality for martingale optimal transport on the line.
\newblock {\em Ann. Probab.}, 45(5):3038--3074, 2017.

\bibitem{BorweinLewis.92}
J.~M. Borwein and A.~S. Lewis.
\newblock Decomposition of multivariate functions.
\newblock {\em Canad. J. Math.}, 44(3):463--482, 1992.

\bibitem{BouchardNutz.11}
B.~Bouchard and M.~Nutz.
\newblock Weak dynamic programming for generalized state constraints.
\newblock {\em SIAM J. Control Optim.}, 50(6):3344--3373, 2012.

\bibitem{BreedenLitzenberger.78}
D.~T. Breeden and R.~H. Litzenberger.
\newblock Prices of state-contingent claims implicit in option prices.
\newblock {\em J. Bus.}, 51(4):621--651, 1978.

\bibitem{DelbaenSchachermayer.06}
F.~Delbaen and W.~Schachermayer.
\newblock {\em The Mathematics of Arbitrage}.
\newblock Springer, Berlin, 2006.

\bibitem{FollmerSchied.11}
H.~F{\"{o}}llmer and A.~Schied.
\newblock {\em Stochastic Finance: An Introduction in Discrete Time}.
\newblock W. de Gruyter, Berlin, 3rd edition, 2011.

\bibitem{GalichonHenryLabordereTouzi.11}
A.~Galichon, P.~Henry-Labord{\`e}re, and N.~Touzi.
\newblock A stochastic control approach to no-arbitrage bounds given marginals,
  with an application to lookback options.
\newblock {\em Ann. Appl. Probab.}, 24(1):312--336, 2014.

\bibitem{GuasoniRasonyiSchachermayer.08}
P.~Guasoni, M.~R\'{a}sonyi, and W.~Schachermayer.
\newblock Consistent price systems and face-lifting pricing under transaction
  costs.
\newblock {\em Ann. Appl. Probab.}, 18(2):491--520, 2008.

\bibitem{Guyon.20}
J.~Guyon.
\newblock The joint {S\&P} 500/{VIX} smile calibration puzzle solved.
\newblock {\em Risk}, 2020.

\bibitem{Guyon.21}
J.~Guyon.
\newblock Dispersion-constrained martingale {S}chr{\"o}dinger problems and the
  exact joint {S\&P} 500/{VIX} smile calibration puzzle.
\newblock {\em Preprint SSRN:3853237}, 2021.

\bibitem{HenryLabordere.19}
P.~Henry-Labord{\`e}re.
\newblock From (martingale) {S}chr{\"o}dinger bridges to a new class of
  stochastic volatility model.
\newblock {\em Preprint SSRN:3353270}, 2019.

\bibitem{Hobson.98}
D.~Hobson.
\newblock Robust hedging of the lookback option.
\newblock {\em Finance Stoch.}, 2(4):329--347, 1998.

\bibitem{Hobson.11}
D.~Hobson.
\newblock The {S}korokhod embedding problem and model-independent bounds for
  option prices.
\newblock In {\em Paris-{P}rinceton {L}ectures on {M}athematical {F}inance
  2010}, volume 2003 of {\em Lecture Notes in Math.}, pages 267--318. Springer,
  Berlin, 2011.

\bibitem{Leonard.14}
C.~L\'{e}onard.
\newblock A survey of the {S}chr\"{o}dinger problem and some of its connections
  with optimal transport.
\newblock {\em Discrete Contin. Dyn. Syst.}, 34(4):1533--1574, 2014.

\bibitem{Nutz.20}
M.~Nutz.
\newblock {\em Introduction to Entropic Optimal Transport}.
\newblock Lecture notes, Columbia University, 2021.
\newblock
  \url{https://www.math.columbia.edu/~mnutz/docs/EOT_lecture_notes.pdf}.

\bibitem{NutzWieselZhao.22b}
M.~Nutz, J.~Wiesel, and L.~Zhao.
\newblock Martingale {S}chr{\"o}dinger bridges and optimal semistatic
  portfolios.
\newblock {\em Arxiv preprint}, 2022.

\bibitem{RuschendorfThomsen.97}
L.~R\"{u}schendorf and W.~Thomsen.
\newblock Closedness of sum spaces and the generalized {S}chr\"{o}dinger
  problem.
\newblock {\em Teor. Veroyatnost. i Primenen.}, 42(3):576--590, 1997.

\end{thebibliography}
\bibliographystyle{abbrv}

\end{document}